\documentclass{article}
\pdfoutput=1
\usepackage[T1]{fontenc}
\usepackage[utf8]{inputenc}
\usepackage{microtype}
\usepackage[total={6in, 8in}]{geometry}
\usepackage{comment}
\usepackage{amsmath}
\usepackage{amssymb}
\usepackage{amsthm}
\usepackage{graphicx}
\usepackage{mathrsfs}
\usepackage{braket}
\usepackage{bbm}
\usepackage{xcolor}
\usepackage{cite}
\usepackage{doi}
\usepackage{tikz}
\usepackage{booktabs}
\usepackage{threeparttable}
\usepackage{multirow}

\usetikzlibrary{quantikz2}
\usepackage{theoremref}

\usepackage{algorithm}
\usepackage{algorithmic}

\newtheorem{thm}{\protect\theoremname}
\theoremstyle{plain}
\newtheorem{lem}{\protect\lemmaname}
\theoremstyle{plain}

\theoremstyle{plain}
\newtheorem*{lem*}{\protect\lemmaname}
\theoremstyle{plain}
\newtheorem*{thm*}{\protect\theoremname}
\theoremstyle{plain}

\theoremstyle{plain}
\newtheorem{cor}{\protect\corollaryname}
\theoremstyle{plain}
\newtheorem*{cor*}{\protect\corollaryname}

\newtheorem{defn}{Definition}
\newtheorem*{defn*}{Definition}

\usepackage[USenglish]{babel}
  \providecommand{\corollaryname}{Corollary}
  \providecommand{\lemmaname}{Lemma}
  \providecommand{\propositionname}{Proposition}
  \providecommand{\remarkname}{Remark}
\providecommand{\theoremname}{Theorem}

\usepackage[normalem]{ulem}
\usepackage{authblk}

\hypersetup{colorlinks=true, linkcolor=blue, urlcolor=blue, citecolor = blue}

\newcommand{\wt}{\widetilde}

\newcommand{\dd}{\mathrm{d}}
\newcommand{\Tr}{\mathrm{Tr}}
\newcommand{\norm}[1]{\lVert #1 \rVert}
\newcommand{\Or}{\mathcal{O}}

\DeclareMathOperator{\spanop}{span}

\usepackage{todonotes}

\newcommand{\mathplaceholder}{\,\cdot\,}
\newcommand{\intinf}{\int_{-\infty}^\infty}

\newcommand{\suppm}{\mathrm{supp}}

\newcommand{\calA}{\ensuremath{\mathcal{A}}}
\newcommand{\calL}{\ensuremath{\mathcal{L}}}
\newcommand{\sfH}{\ensuremath{\mathsf{H}}}
\newcommand{\calO}{\ensuremath{\mathcal{O}}}
\newcommand{\calB}{\ensuremath{\mathcal{B}}}
\newcommand{\calH}{\ensuremath{\mathcal{H}}}
\newcommand{\frakH}{\ensuremath{\mathfrak{H}}}
\newcommand{\bbR}{\ensuremath{\mathbb{R}}}
\newcommand{\bbC}{\ensuremath{\mathbb{C}}}

\title{Quantum Gibbs sampling through the detectability lemma}
\author[1,2]{Di Fang}
\author[1,3,4]{Jianfeng Lu}
\author[1,2,5]{Yu Tong}
\author[5]{Chu Zhao}
\affil[1]{Department of Mathematics, Duke University}
\affil[2]{Duke Quantum Center, Duke University}
\affil[3]{Department of Physics, Duke University}
\affil[4]{Department of Chemistry, Duke University}
\affil[5]{Department of Electrical and Computer Engineering, Duke University}
\date{\today}

\begin{document}

\maketitle

\begin{abstract}
    Gibbs state preparation
    is an important subroutine in quantum computing. In this work we use the detectability lemma to improve Gibbs state preparation. Specifically, we design new Gibbs state preparation methods that do not rely on simulating Lindbladian evolution, thus avoiding the overhead from it. For local Lindbladians consisting of $M$ terms, this approach reduces the cost by a factor of $\Or(M)$. We also combine the detectability lemma operator and quantum singular value transformation to implement ground state projection operators of frustration-free Hamiltonians, resulting in a quadratic speedup in the spectral gap dependence. Applying this method to Lindbladians for the Gibbs state of local commuting Hamiltonians, we achieve quadratically better dependence on the Lindbladian spectral gap.
\end{abstract}

\section{Introduction}

Efficiently sampling from the Boltzmann distribution is a cornerstone problem in classical statistical physics and machine learning. Its quantum counterpart, the preparation of the quantum Gibbs state, is also central to the study of quantum phase transitions and the process of thermalization. Moreover, it serves as a subroutine for various quantum algorithms, ranging from quantum semi-definite programming \cite{BrandaoKalev2017quantum, vanApeldoorn2017quantum} to simulating condensed matter systems at finite temperature \cite{Motta2020determining,chen2023quantumthermalstatepreparation,LloydAbanin2025quantum}.

In recent years, significant advances have been made in engineering Lindbladian dynamics to prepare Gibbs states~\cite{chen2023quantumthermalstatepreparation,ChenKastoryanoBrandaoGilyen2023,RallWangWocjan2023,DingLiLin2024efficient}.
The Lindbladian generators constructed in these works can be efficiently implemented on quantum computers using state-of-the-art Lindbladian solvers, such as those in~\cite{CleveWang2017, WocjanTemme2023, ChenKastoryanoBrandaoGilyen2023, ChenKastoryanoGilyen2023, LiWang2023, DingLiLin2024simulating,PocrnicSegalWiebe2024,chen2025efficient}, leading to efficient quantum algorithms for Gibbs state preparation if the Lindbladian dynamics converges quickly to the Gibbs state. In this approach, the overall complexity depends crucially on the mixing time of the Lindbladian dynamics. Consequently, much research has focused on establishing rigorous mixing time bounds for various physical systems \cite{temme2014hypercontractivity,ding2024polynomial,bardet2024entropy,alicki2009thermalization,ramkumar2024mixing,KochanowskiAlhambraCapelRouze2024rapid,BardetCapelLiEtAl2023rapid,RouzeFrancaAlhambra2024optimal,capel2024quasi,kastoryano2013quantum,temme2015fast,ChenLiLuYing2024randomized,Rakovszky2024bottlenecks,LiLu2024quantum,Barthel2022solving,Fang2024mixing,DingLiLin2024efficient,DingChenLin2024single,TongZhan2025fast,BakshiLiuMoitraTang2025dobrushin}.

However, constructing a Lindbladian dynamics and then simulating it on a quantum computer may not be the most efficient way to prepare Gibbs states. Simulation requires the quantum state produced by the algorithm to follow the dynamics closely along the entire trajectory of the dynamics. This requirement is overly restrictive for Gibbs state preparation: it is sufficient for the dynamics produced by the quantum algorithm to converge to a state that is close to the Gibbs state on a timescale comparable to the original dynamics. One algorithm following this idea is proposed in \cite{BakshiLiuMoitraTang2025dobrushin}, which randomly applies a quantum channel corresponding to a Lindbladian term at each step, thus avoiding the overhead from accurately simulating the Lindbladian. Similar ideas have been used to improve Lindbladian simulation \cite{ChenLiLuYing2024randomized}. Quantum algorithms that implement a discrete-time process satisfying the detailed balance condition without simulating the Lindbladian evolution have also been developed \cite{GilyenChenDoriguelloKastoryano2024quantum,JiangIrani2024quantum}.
We see the same consideration in classical Markov Chain Monte Carlo methods, where the objective is the design of an efficient transition kernel that ensures rapid convergence to the stationary distribution, rather than reproducing a specific physical stochastic process.

Another way to improve the runtime of the quantum algorithm is through improving the dependence on the spectral gap. While directly simulating the Lindbladian typically incurs a cost that scales linearly in inverse spectral gap, it is sometimes possible to design algorithms that run in time that scales as the \textit{square root} of the inverse spectral gap, thus achieving a quadratic speedup. Examples include the Lindbladian constructed in \cite{chen2023quantumthermalstatepreparation}, and certain Lindbladians corresponding to Hamiltonians whose eigenvalues can either be exactly computed or satisfy a ``rounding promise'' \cite{WocjanTemme2023}. These works follow the idea developed for quantizing a classical reversible Markov chain in \cite{szegedy2004quantum}, which also resulted in a quadratic improvement in the gap dependence. A recent result also extended such speedup to the continuous-time setting for classical Langevin dynamics \cite{LengDingChenLin2026operator}.

Detectability lemma (DL) is a powerful technique to study Hamiltonian complexity and to prove entanglement area laws \cite{AharonovAradVaziraniLandau2011detectability,AradKitaevLandauVazirani2013area,AharonovAradLandauVazirani2009}.
In the detectability lemma, the DL operator denotes the layered product of local ground-space projectors, which acts as an approximate projector onto the global ground space.
In this work, we use the DL operator as an algorithmic tool to improve Gibbs state preparation along both of these directions. More specifically, we will construct efficiently implementable quantum channels that bypass Lindbladian simulation,
thereby achieving a speedup by a factor of $M$, the number of local Lindbladian terms. These channels update the quantum state locally and are similar in spirit to classical Gibbs sampling algorithms \cite{ascolani2024entropy, liu1995covariance} and coordinate hit-and-run methods \cite{narayanan2022mixing}, where subsets of variables are updated conditional on the rest.
We will also consider the parent Hamiltonian of the Lindbladian, and use the detectability lemma (DL)
to construct an efficiently implementable projection operator for its ground state. This allows us to achieve quadratically improved dependence on the spectral gap when an annealing path is available. 

\subsection{Main results}

Our first main result is an algorithm to prepare the stationary state of a Lindbladian without simulating its time evolution. When the Lindbladian consists of $M$ local terms, this approach achieves a linear dependence on $M$ in its runtime. By comparison, state-of-the-art Lindbladian simulation algorithms, such as \cite{DingLiLin2024simulating}, need to first normalize the Lindbladian, thus resulting in an $M$ factor appearing in the rescaled evolution time, making the simulation cost depending quadratically on $M$. Our method therefore achieves a factor of $M$ speedup compared to the simulation-based method. Below we provide a precise statement of the result:
\begin{thm}[Corollary~\ref{cor:gate_complexity}]
\label{thm:gate_complexity_main}
    Let $\mathcal{L}=\sum_{m=1}^M \mathcal{L}_m$, where each $\mathcal{L}_m$ is a Lindbladian that satisfies $\sigma$-KMS detailed balance condition, and we assume $\mathcal{L}$ has a non-degenerate $0$-eigenspace. We assume that each $\mathcal{L}_m$ commutes with all but at most $g$ others. Then we can prepare $\sigma$ up to $\epsilon$ error in trace distance using $\wt{\Or}\left(\frac{M g^2}{\mathrm{gap}(\mathcal{L})}\log\left(\frac{1}{\sigma_{\min}\epsilon}\right)\log^c\left(\frac{M}{\epsilon}\right)\right)$ elementary single- and two-qubit gates, where $\sigma_{\min}$ is the minimum eigenvalue of $\sigma$, and $c$ is a constant from the Solovay-Kitaev theorem.
\end{thm}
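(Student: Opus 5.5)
We avoid simulating $e^{t\mathcal{L}}$ and instead repeatedly apply a channel built from the local stationary projections, analyzing it with the detectability lemma. For each $m$ we use the projection onto the kernel of $\mathcal{L}_m$,
\[
\mathcal{P}_m=\lim_{t\to\infty}e^{t\mathcal{L}_m}.
\]
KMS detailed balance makes $\mathcal{L}_m$ self-adjoint with respect to the KMS inner product $\langle X,Y\rangle_\sigma=\Tr(X^\dagger\sigma^{1/2}Y\sigma^{1/2})$. Hence $\mathcal{P}_m$ is an orthogonal projection in this inner product, and it is also a CPTP map fixing $\sigma$. Each $\mathcal{P}_m$ can be approximated to accuracy $\delta$ by $e^{T\mathcal{L}_m}$ with $T=\Or(\mathrm{gap}(\mathcal{L}_m)^{-1}\log(1/\delta))$, or by a direct construction. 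Since $\mathcal{L}_m$ is local, this costs $\mathrm{polylog}$ gates once we compile with Solovay--Kitaev, which is where the $\log^c(M/\epsilon)$ factor comes from.

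The Lindbladian $-\mathcal{L}$, viewed in the KMS Hilbert space, is a frustration-free "Hamiltonian" $\sum_m(-\mathcal{L}_m)$. Its common ground space is $\spanop\{\sigma\}$, which is unique by the non-degeneracy assumption. We do the following in order.

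\textbf{Layering.} Using the commutation graph, whose degree is at most $g$, we greedily color the terms into $\Or(g)$ layers of mutually commuting $\mathcal{L}_m$. We then define the DL operator $\mathcal{D}=\prod_{\text{layers }\ell}\prod_{m\in\ell}\mathcal{P}_m$. This is a product of CPTP maps, so it is a channel, and it fixes $\sigma$.

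\textbf{Detectability lemma bound.} We apply the (quantum union bound / Anshu–Arad–Vidick) form of the detectability lemma in the KMS Hilbert space. It gives
\[
\norm{\mathcal{D}(X)}_\sigma^2\le \frac{1}{1+\gamma/g^2}\norm{X}_\sigma^2
\]
for $X\perp\sigma$, where $\gamma$ is the gap of the frustration-free Hamiltonian $\sum_m(I-\mathcal{P}_m)$. The step that needs care is relating $\gamma$ to $\mathrm{gap}(\mathcal{L})$. Since $-\mathcal{L}_m\succeq\mathrm{gap}(\mathcal{L}_m)(I-\mathcal{P}_m)$ and $-\mathcal{L}_m\preceq\norm{\mathcal{L}_m}(I-\mathcal{P}_m)$, normalizing $\norm{\mathcal{L}_m}=\Or(1)$ gives $\gamma\gtrsim\mathrm{gap}(\mathcal{L})$. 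The same normalization is what the $M$-factor comparison in the statement implicitly assumes.

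\textbf{Iteration count.} Iterating $\mathcal{D}$ $K$ times contracts the KMS norm of $\rho-\sigma$ by $(1+\gamma/g^2)^{-K/2}$. We convert to trace distance using
\[
\norm{\rho-\sigma}_1\le\norm{\sigma^{-1/4}(\rho-\sigma)\sigma^{-1/4}}_\sigma,
\]
and note that the initial KMS distance is at most $\sigma_{\min}^{-1/2}$. This gives $K=\Or\big(\tfrac{g^2}{\mathrm{gap}(\mathcal{L})}\log\tfrac{1}{\sigma_{\min}\epsilon}\big)$.

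\textbf{Gate count.} Each application of $\mathcal{D}$ uses $M$ local channels. We take $\delta=\epsilon/(KM)$ so that the implementation errors sum to at most $\epsilon$ in diamond norm, using contractivity of channels. The total cost is then $KM\cdot\mathrm{polylog}(M/\epsilon)$, which is the claimed bound.

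The main obstacle is the second step. The detectability lemma must be transplanted to the non-standard KMS inner product, and the projector gap $\gamma$ must be related to $\mathrm{gap}(\mathcal{L})$ with only a $g^2$ loss. This uses the facts that $\mathcal{P}_m$ and $\mathcal{P}_{m'}$ commute whenever $\mathcal{L}_m$ and $\mathcal{L}_{m'}$ do, and that each local gap is bounded below by a constant; if needed, we would fold $\min_m\mathrm{gap}(\mathcal{L}_m)$ into the $\wt{\Or}$.
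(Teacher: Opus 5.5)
Your overall plan is the paper's. You use the same channel built from $\mathcal{P}_m=\lim_{t\to\infty}e^{t\mathcal{L}_m}$, the Anshu--Arad--Vidick detectability lemma applied to $\sum_m(I-\mathcal{P}_m)$ in the KMS Hilbert space, the comparison $\gamma\ge\mathrm{gap}(\mathcal{L})$ from $-\mathcal{L}_m\preceq I-\mathcal{P}_m$, and the same round and gate counts.

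The step that fails as written is turning the contraction into a trace-distance bound, because you mix the Heisenberg and Schr\"odinger pictures.
\begin{itemize}
\item In the inner product you fixed, $\langle X,Y\rangle_\sigma=\Tr(X^\dagger\sigma^{1/2}Y\sigma^{1/2})$, the orthogonal projections are the Heisenberg maps $\mathcal{P}_m$. Their common fixed space is $\spanop\{I\}$, not $\spanop\{\sigma\}$, and its orthogonal complement is $\{X:\Tr(\sigma X)=0\}$.
\item The maps you actually iterate on $\rho-\sigma$ are the adjoints $\mathcal{P}_m^\dagger=\Gamma_\sigma\mathcal{P}_m\Gamma_\sigma^{-1}$. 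These are in general not self-adjoint for $\langle\cdot,\cdot\rangle_\sigma$, so the detectability lemma says nothing about $\norm{\mathcal{D}^\dagger(\rho-\sigma)}_\sigma$.
\item Your conversion inequality is false. A direct computation gives $\norm{\sigma^{-1/4}A\sigma^{-1/4}}_\sigma=\norm{A}_2$, which can be smaller than $\norm{A}_1$ by a factor $\sqrt{d}$, $d=2^n$. A telltale sign is that the initial distance would then not involve $\sigma_{\min}$ at all.
\end{itemize}

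There are two ways to repair this.
\begin{itemize}
\item Work in $\langle A,B\rangle_{\sigma^{-1}}=\Tr(A^\dagger\sigma^{-1/2}B\sigma^{-1/2})$. There the $\mathcal{P}_m^\dagger$ are orthogonal projections, and the orthogonal complement of $\sigma$ is exactly the traceless operators. The operator $\sum_m(I-\mathcal{P}_m^\dagger)=\Gamma_\sigma\bigl(\sum_m(I-\mathcal{P}_m)\bigr)\Gamma_\sigma^{-1}$ has the same gap. Moreover $\norm{A}_1\le\langle A,A\rangle_{\sigma^{-1}}^{1/2}$ and $\langle\rho_0-\sigma,\rho_0-\sigma\rangle_{\sigma^{-1}}\le 1/\sigma_{\min}$.
\item Or follow the paper and stay in the Heisenberg picture. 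For Hermitian $\norm{X}\le 1$, set $\tilde X=X-\Tr(\sigma X)I$ and contract $\mathcal{D}^k(\tilde X)$ in $\norm{\cdot}_\sigma$. Then write $\Tr[X(\rho_k-\sigma)]=\langle\mathcal{D}^k(\tilde X),\sigma^{-1/2}\rho_0\sigma^{-1/2}\rangle_\sigma$. Finish with Cauchy--Schwarz, the bound $\norm{\sigma^{-1/2}\rho_0\sigma^{-1/2}}_\sigma\le\sigma_{\min}^{-1/2}$, and trace-norm duality.
\end{itemize}

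Two smaller points.
\begin{itemize}
\item The greedy layering is harmless but unnecessary. The Anshu--Arad--Vidick bound holds for the product in any order, which is why the paper uses an arbitrary ordering.
\item The local gaps never enter. The comparison uses only the upper bound $-\mathcal{L}_m\preceq\norm{\mathcal{L}_m}(I-\mathcal{P}_m)$ with $\norm{\mathcal{L}_m}\le 1$. Also, $\mathcal{P}_m$ is itself a channel on $\Or(1)$ qubits and can be compiled directly by Solovay--Kitaev, so there is no reason to approximate it by $e^{T\mathcal{L}_m}$. Folding $\min_m\mathrm{gap}(\mathcal{L}_m)$ into the bound would actually weaken the stated complexity whenever local gaps are small.
\end{itemize}
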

In the above, the KMS detailed balance condition is defined in Definition~\ref{defn:KMS_detailed_balance}. $\mathcal{L}$ satisfying $\sigma$-KMS detailed balance condition implies that $\sigma$ is its stationary state. The algorithm consists of a sequence of local quantum channels that are applied repeatedly. The convergence towards the stationary state $\sigma$ is analyzed using the detectability lemma (Lemma~\ref{lem:detectability_lemma}).

For the second direction of refinement, the DL operator can be used to obtain a quadratic improvement in the spectral gap dependence. For this, we require the parent Hamiltonian of the Lindbladian to have strict locality. To ensure this is true, we focus on preparing the Gibbs states of commuting local Hamiltonians. 

\begin{thm}[Corollary~\ref{cor:commuting}]
\label{thm:commuting_main}
    Consider commuting and bounded-degree local Hamiltonian $H$. For each $\nu\in [0, \beta]$, we construct $\sigma_\nu$-detailed balanced $\calL_\nu$ as stated in Theorem~\ref{thm:gibbsstateprepcomplex}
    and assume its irreducibility. Then, the corresponding parent Hamiltonian $\sfH_\nu$ will be bounded-degree local and frustration-free, thus it satisfies the requirements of Theorem~\ref{thm:gibbsstateprepcomplex}. Therefore there exists a quantum algorithm that prepares the purified Gibbs state at inverse temperature $\beta$ with error $\delta$ and success probability at least $1-\delta$ with gate complexity
    \begin{equation}
        \calO \left( \frac{M\beta \norm{H}}{\sqrt{\gamma}} \log^2 \left(\frac{\beta\norm{H}}{\delta}\right) \log^c \left( \frac{M}{\sqrt{\gamma}} \frac{\beta\norm{H}}{\delta} 
        \right) \right)
    \end{equation} using $\calO (\log M) + 1$ resettable ancilla qubits, where $c$ is the exponent in the Solovay-Kitaev theorem, and $\gamma = \min_j \mathrm{gap}(\mathcal{L}_{\beta_j})$, $\beta_j = j\beta/K$, $K=\Theta(\beta\|H\|)$.
\end{thm}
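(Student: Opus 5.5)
The statement is a corollary of Theorem~\ref{thm:gibbsstateprepcomplex}, so the plan is to check that theorem's hypotheses along the annealing path $\nu\in\{\beta_j\}_{j=0}^K$ and then substitute parameters into its complexity bound. There are three steps: (i) each $\calL_\nu$ is a sum of strictly local, $\sigma_\nu$-KMS detailed balanced terms; (ii) the parent Hamiltonian $\sfH_\nu$ is bounded-degree local and frustration-free; (iii) consecutive Gibbs states along the path overlap enough that $K=\Theta(\beta\norm{H})$ steps suffice.

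For (i), write $H=\sum_a h_a$ with commuting, bounded-degree local terms. Take jumps $A^a$ supported near the support of $h_a$. Since all $h_b$ commute, $e^{-\nu H/2}A^a e^{\nu H/2}$ only involves the $h_b$ whose supports overlap $\mathrm{supp}(A^a)$. So the Bohr-frequency decomposition and the filtered operators $\mathcal{L}_{\nu,a}$ are supported on a bounded neighbourhood, independent of system size. Each $\mathcal{L}_{\nu,a}$ is built to be $\sigma_\nu$-KMS detailed balanced, e.g.\ a Davies or Metropolis-type generator restricted to that neighbourhood, exactly as in the construction of Theorem~\ref{thm:gibbsstateprepcomplex}.

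For (ii), use the standard map to the parent Hamiltonian: $\sfH_\nu=-\sigma_\nu^{-1/4}\calL_\nu(\sigma_\nu^{1/4}\cdot\sigma_\nu^{1/4})\sigma_\nu^{-1/4}$, vectorised. KMS detailed balance makes each term $\sfH_{\nu,a}$ Hermitian and positive semidefinite. The key point is locality of the conjugation by $\sigma_\nu^{\pm1/4}$: because $H$ is commuting, $\sigma_\nu^{1/4}X\sigma_\nu^{-1/4}$ for $X$ supported on a region $R$ depends only on the $h_b$ touching $R$, so $\sfH_{\nu,a}$ is supported on a bounded neighbourhood. Bounded degree then follows from bounded degree of $H$.

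Frustration-freeness comes from the fact that each $\mathcal{L}_{\nu,a}$ individually fixes $\sigma_\nu$. Hence $|\sqrt{\sigma_\nu}\rangle$ is annihilated by every $\sfH_{\nu,a}\succeq 0$. Irreducibility makes it the unique ground state, and $\mathrm{gap}(\sfH_\nu)=\mathrm{gap}(\calL_\nu)\ge\gamma$.

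For (iii), compute the overlap between consecutive purified Gibbs states:
\[
\langle\sqrt{\sigma_{\beta_j}}|\sqrt{\sigma_{\beta_{j+1}}}\rangle=\frac{Z_{(\beta_j+\beta_{j+1})/2}}{\sqrt{Z_{\beta_j}Z_{\beta_{j+1}}}}\ge e^{-\Delta\beta\norm{H}/2}.
\]
This is obtained by shifting $H$ to $H-\lambda_{\min}$, so that $e^{-tH}$ only shrinks the trace by a factor in $[e^{-t\norm{H}},1]$, and applying this bound to the numerator and denominator. With $\Delta\beta=\beta/K$ and $K=\Theta(\beta\norm{H})$, the overlap is $\Omega(1)$. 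Theorem~\ref{thm:gibbsstateprepcomplex}, i.e.\ the DL operator with QSVT-amplified ground-state projection at cost $\wt\Or(M/\sqrt{\gamma})$ per step and error split as $\delta/K$, then gives the stated bound. The factor $\beta\norm{H}$ comes from $K$, the $\log^2$ from the per-step accuracy and repetitions for success probability, and the $\log^c$ from Solovay--Kitaev compilation of the local gates.

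I expect step (ii) to be the main obstacle. Its locality of $\sfH_\nu$ genuinely needs commutativity, since for non-commuting $H$ the conjugation by $\sigma_\nu^{1/4}$ spreads operators. One must also confirm that the locality radius and degree are uniform in $\nu\in[0,\beta]$, so that the DL constants entering Theorem~\ref{thm:gibbsstateprepcomplex} do not depend on $j$.
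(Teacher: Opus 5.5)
Your verification of the hypotheses matches the paper. Locality of each $\sfH^a_\nu$ comes from the fact that real- and imaginary-time conjugation of a local operator by a commuting local $H$ involves only the terms touching its support, uniformly in $t$ and $\nu$. Frustration-freeness comes from each local term being individually $\sigma_\nu$-KMS detailed balanced, so $v(\sqrt{\sigma_\nu})$ is annihilated by every $\sfH^a_\nu$. The paper phrases this for the Ding--Li--Lin time-domain form $L^a=\int f^a(t)A^a(t)\,\mathrm{d}t$ with coherent terms $G^a$; your Bohr-frequency/Davies description is a special case of the same argument.

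The genuinely different ingredient is your step (iii). The paper proves no overlap bound itself. Inside Theorem~\ref{thm:gibbsstateprepcomplex} it imports Lemma~\ref{lem:overlap} (Proposition~G.2 of Chen et al.), $|\langle\psi_{\nu+\delta\nu}|\psi_\nu\rangle|^2=1-\Or((\delta\nu)^2\norm{H}^2)$, which then needs the constant $\alpha$ in $K=\alpha\beta\norm{H}$ to be sufficiently large. You instead compute the overlap exactly as $Z_{(\beta_j+\beta_{j+1})/2}/\sqrt{Z_{\beta_j}Z_{\beta_{j+1}}}$ and bound it by hand. This is self-contained and valid for every step size.

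As sketched, though, your shift argument gives $e^{-\Delta\beta(\lambda_{\max}-\lambda_{\min})/2}\ge e^{-\Delta\beta\norm{H}}$, not $e^{-\Delta\beta\norm{H}/2}$, because $\norm{H-\lambda_{\min}}$ can be as large as $2\norm{H}$. This is harmless for getting a constant $b$, and your claimed bound is in fact true. Write the overlap as $\mathbb{E}[x]/\sqrt{\mathbb{E}[x^2]}$ with $x=e^{-\Delta\beta E/2}\in[e^{-\Delta\beta\norm{H}/2},e^{\Delta\beta\norm{H}/2}]$ and $E$ distributed according to $\sigma_{\beta_j}$. The Kantorovich-type bound from $(x-m)(M-x)\ge 0$ then gives $1/\cosh(\Delta\beta\norm{H}/2)$. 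Its square is $1-\tanh^2(\Delta\beta\norm{H}/2)\ge 1-(\Delta\beta)^2\norm{H}^2/4$, so your route even recovers Lemma~\ref{lem:overlap} with an explicit constant.

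Two smaller remarks. Your formula $-\sigma_\nu^{-1/4}\calL_\nu(\sigma_\nu^{1/4}\cdot\sigma_\nu^{1/4})\sigma_\nu^{-1/4}$ and the phrase ``$\calL_{\nu,a}$ fixes $\sigma_\nu$'' are Schr\"odinger-picture statements. With the paper's Heisenberg-picture $\calL_\nu$ the correct conjugation is $\Gamma_\sigma^{1/2}\calL_\nu\Gamma_\sigma^{-1/2}$; otherwise the result is not Hermitian and does not annihilate $\sqrt{\sigma_\nu}$.

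Also, Theorem~\ref{thm:gibbsstateprepcomplex} uses no repetitions. The success probability $1-\delta$ comes directly from Lemma~\ref{lem:finalerror}. The two logarithms are the projector precision $\log(1/\mu)=\Or(\log(K/\delta))$ and the fixed-point amplification degree $\Or(b^{-1}\log(K/\delta))$. Since you invoke the theorem as a black box, this does not affect your argument.
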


To prove the above theorem, we propose an algorithm for the frustration-free Hamiltonian ground state problem based on the DL operator, which may be of independent interest. By applying quantum singular value transformation (QSVT) \cite{Gilyen_2019singularvalue} to the DL operator, we obtain a ground state projection operator whose cost scales as the inverse square root of the Hamiltonian spectral gap. Compared to previous methods that uses the block encoding of the original Hamiltonian to prepare the ground state \cite{LinTong2020,Ge2018fastergroundstatepreparation}, this approach achieves a quadratic speedup in the spectral gap dependence.
\begin{thm}[Corollary~\ref{cor:gate_complexity_approximate_proj}]
    \label{cor:gate_complexity_approximate_proj_main}
    Let $H = \sum_{m=1}^M H_m$, where each $H_m$ satisfies $0\preceq H_m\preceq I$, is supported on at most $\mathsf{k}$ qubits, and overlaps with at most $g$ other terms $H_{m'}$. We assume that $H$ is frustration-free in the sense that $P_H H_m=0$ for all $m=1,2,\cdots,M$, where $P_H$ is its ground space projection operator. Then a $(1,\Or(\log(M)),\epsilon)$-block encoding of $P_H$ (see Definition~\ref{defn:block_encoding}) can be obtained using 
    \[
    \Or\left(M\gamma^{-1/2}\log(1/\epsilon)\log^c(M\gamma^{-1/2}\log(1/\epsilon)/\epsilon)\right)
    \]
    elementary gates, where $c$ is the exponent in the Solovay-Kitaev theorem.
\end{thm}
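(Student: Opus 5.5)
The plan is to build the DL operator from the local ground-space projectors of the terms. Then we view it as a block-encoded operator whose singular values separate the ground space from the rest, and apply QSVT with a polynomial that behaves like a step function.

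\textbf{Step 1: local projectors and layering.} For each $m$ let $Q_m$ be the projector onto $\ker H_m$. Since $H_m$ acts on at most $\mathsf{k}=\Or(1)$ qubits, $Q_m$ is a fixed $2^{\mathsf{k}}$-dimensional operator. The reflection $2Q_m-I$ is unitary and can be compiled to precision $\epsilon'$ with $\Or(\log^c(1/\epsilon'))$ gates by Solovay--Kitaev. A $(1,1,\epsilon')$-block encoding of $Q_m$ follows from one ancilla via the linear combination $(I+(2Q_m-I))/2$; alternatively, $Q_m$ can be encoded directly by a controlled reflection.

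Since each term overlaps at most $g$ others, the terms can be greedily colored into $L\le g+1$ layers of pairwise non-overlapping terms. Define
\[
\mathrm{DL}=\prod_{\ell=1}^{L}\prod_{m\in\text{layer }\ell}Q_m .
\]
Products of block encodings give a block encoding of $\mathrm{DL}$ using $M$ local projector encodings. Reusing ancillas, or using $\Or(\log M)$ flag/counter qubits, keeps the ancilla count at $\Or(\log M)$. The cost is $\Or(M)$ local gates per application.

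\textbf{Step 2: spectral properties.} Frustration-freeness gives $Q_mP_H=P_H$, so $\mathrm{DL}\,P_H=P_H=\mathrm{DL}^\dagger P_H$. Thus $\mathrm{DL}$ acts as the identity on the ground space and maps its orthogonal complement into itself. By the detectability lemma (Lemma~\ref{lem:detectability_lemma}), on the orthogonal complement
\[
\norm{\mathrm{DL}\,(I-P_H)}\le \bigl(1+\gamma/g^2\bigr)^{-1/2}=:1-\Delta ,
\]
so $\Delta=\Omega(\gamma/g^2)$, and $\Delta=\Omega(\gamma)$ for constant $g$.

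Consequently $\mathrm{DL}$ has singular value $1$ on the ground space, with matching left and right singular vectors in $P_H$, and all other singular values are at most $1-\Delta$. It may be cleaner to use the Hermitian operator $\mathrm{DL}^\dagger\mathrm{DL}$, which costs two DL applications: its eigenvalues are $1$ on the ground space and at most $(1-\Delta)^2$ elsewhere.

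\textbf{Step 3: QSVT.} Choose an even polynomial $p$ of degree $d=\Or(\Delta^{-1/2}\log(1/\epsilon))$ with $|p|\le1$ on $[-1,1]$, $|p(1)-1|\le\epsilon$, and $|p(x)|\le\epsilon$ for $|x|\le1-\Delta$. Such a polynomial exists by the standard Chebyshev-based construction: a rescaled Chebyshev polynomial $T_d(x/(1-\Delta))$, normalized, grows like $e^{d\sqrt{\Delta}}$ just beyond the threshold. This square-root dependence on the gap is the source of the quadratic speedup.

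QSVT then yields a $(1,\Or(\log M),\Or(\epsilon))$-block encoding of $p^{(SV)}(\mathrm{DL})\approx P_H$. It uses $d$ applications of the DL block encoding and its inverse, each costing $\Or(M)$ local projectors. The total number of local operations is $\Or(Md)=\Or(M\gamma^{-1/2}\log(1/\epsilon))$.

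\textbf{Step 4: error budget.} To keep the accumulated compilation error below $\epsilon$, set $\epsilon'=\epsilon/(Md)$. The gate count per local reflection is then $\log^c(M\gamma^{-1/2}\log(1/\epsilon)/\epsilon)$, which gives the claimed bound.

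\textbf{Main obstacle.} The hardest step is making Step 3 rigorous for a non-Hermitian $\mathrm{DL}$. The projection should be exactly onto $P_H$, and the QSVT output must be a block encoding of $P_H$ itself rather than of some $U$ times $P_H$ with mismatched singular vectors. I would handle this by working with $\mathrm{DL}^\dagger\mathrm{DL}$, or by using the identity $\mathrm{DL}P_H=P_H$ to show that the unit singular subspaces coincide with the range of $P_H$ on both sides. The latter uses $\mathrm{DL}(I-P_H)=(I-P_H)\mathrm{DL}(I-P_H)$, which holds because $P_H\,\mathrm{DL}=P_H$ as well.

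A second point is to check that the version of the detectability lemma stated earlier gives a gap bound in terms of $\gamma$ with the $g$-dependence absorbed into the constants, as the statement suggests.
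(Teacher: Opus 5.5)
Your proposal is correct and follows essentially the same route as the paper. The shared steps are: a DL operator built from block-encoded local ground-space projectors, the detectability-lemma bound $(1+\gamma/g^2)^{-1/2}$ on the non-unit singular values, QSVT with the normalized rescaled Chebyshev polynomial of degree $\Or(\gamma^{-1/2}\log(1/\epsilon))$, and Solovay--Kitaev compilation at per-gate precision $\epsilon/(Md)$. Your layering is unnecessary, since the lemma holds for any ordering, and your block decomposition $\mathrm{DL}=P_H+(I-P_H)\mathrm{DL}(I-P_H)$ is a clean substitute for the paper's Lemma~\ref{lem:projection_operator_in_terms_of_UV}.

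The one step to make explicit is that the lemma is applied to $\sum_m(I-Q_m)$, not to $H$ itself. Its gap is at least $\gamma$ because $0\preceq H_m\preceq I$ gives $I-Q_m\succeq H_m$, and both operators have kernel $\mathrm{range}(P_H)$. This is the only place the hypothesis $H_m\preceq I$ is used.
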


\subsection{Organization}
The remainder of this paper is organized as follows. In Section \ref{sec:problem_setup}, we define the notation and provide the formal definitions for KMS detailed balance. Section \ref{sec:detectability_lemma} reviews the detectability lemma and applies it to Lindbladians. In Section \ref{sec:implementing_ground_state_projection_operators}, we describe the implementation of Hamiltonian ground state projection operators via the detectability lemma with a quadratic speedup in the spectral gap dependence. Section \ref{sec:parent_hamiltonian} details the construction of the parent Hamiltonian of a KMS-detailed balanced Lindbladian. Section \ref{sec:annealing} presents the state preparation algorithm via annealing with quadratically improved spectral gap dependence. Finally, in Section \ref{sec:commuting_ham}, we apply these techniques to commuting bounded-degree local Hamiltonians and discuss the resulting gate complexity.

\section{Lindbladians and detailed balance}
\label{sec:problem_setup}

We consider an $n$-qubit quantum system, with a Hilbert space $\mathcal{H}$. The operator algebra is denoted by $B(\mathcal{H})$ while the superoperator algebra (consisting of linear operators on the operator algebra) is denoted by $B(B(\mathcal{H}))$.
A Lindbladian $\mathcal{L}$ is the generator of a quantum Markov semigroup $\{e^{t\mathcal{L}}\}_{t\geq 0}$. It takes a specific form
\[
\mathcal{L}[X] = \sum_{j=1}^r L_j^\dag X L_j - \frac{1}{2}\{L_j^\dag L_j,X\},
\]
where $\{\cdot\}$ denotes the anti-commutator, and each $L_j\in B(\mathcal{H})$ is called a \emph{jump operator}. 
We consider a local Lindbladian
\begin{equation}
    \label{eq:Lindbladian}
    \mathcal{L} = \sum_{m=1}^M \mathcal{L}_m,
\end{equation}
where each $\mathcal{L}_m$ is a Lindbladian that acts non-trivially only on $\mathsf{k}$ sites, and overlaps with at most $g$ other terms $\mathcal{L}_{m'}$. These Lindbladian terms do not have to commute with each other.
$\|\mathcal{L}_m\|_\diamond\leq 1$ where $\|\cdot\|_\diamond$ denotes the diamond norm.
We are working in the Heisenberg picture and therefore
\begin{equation}
\label{eq:trace_preserving}
    \mathcal{L}_1(I)=\cdots=\mathcal{L}_M(I)=\mathcal{L}(I)=0.
\end{equation}

We will define the $\sigma$-KMS inner product for a quantum state $\sigma\in B(\mathcal{H})$:
\begin{defn}[$\sigma$-KMS inner product]
\label{defn:KMS_inner_product}
    For any $X,Y\in B(\mathcal{H})$, we denote
    \[
    \braket{X,Y}_\sigma = \Tr[X^\dag \sqrt{\sigma} Y \sqrt{\sigma}].
    \]
    The corresponding norm we denote by
    \[
    \|X\|_\sigma = \braket{X,X}_\sigma^{1/2}.
    \]
\end{defn}
With this we can generalize the detailed balance condition (DBC) to the quantum setting
\begin{defn}[$\sigma$-KMS detailed balance]
\label{defn:KMS_detailed_balance}
    A superoperator $\mathcal{A}\in B(B(\mathcal{H}))$ satisfies $\sigma$-KMS detailed balance if it is self-adjoint under the $\sigma$-KMS inner product, i.e.,
    \[
    \braket{X,\mathcal{A}(Y)}_\sigma = \braket{\mathcal{A}(X),Y}_\sigma,
    \]
    for all $X,Y\in B(\mathcal{H})$.
\end{defn}

From now on we assume that each $\mathcal{L}_m$ satisfies $\sigma$-KMS DBC, which then implies
\[
\Tr[X\mathcal{L}_m^\dag(\sigma)] = \Tr[\mathcal{L}_m(X)\sigma] = \braket{\mathcal{L}_m(X),I}_\sigma = \braket{X,\mathcal{L}_m(I)}=0,
\]
for all $X\in B(\mathcal{H})$.\footnote{Here we use $^\dag$ to denote conjugation under the Hilbert-Schmidt inner product.} This then implies $\mathcal{L}_m^\dag(\sigma)=0$ for all $m$. Therefore $\sigma$ is a stationary state for the dynamical semigroup generated by $\mathcal{L}^\dag$.

Because $\mathcal{L}$ is self-adjoint under the $\sigma$-KMS inner product, its spectrum must be real and located on the non-positive part of the real axis. $0$ must be an eigenvalue and here we assume it is non-degenerate. In particular, $\ker \mathcal{L} = \spanop{I}$. We denote the gap between $0$ and the second largest eigenvalue of $\mathcal{L}$ by $\mathrm{gap}(\mathcal{L})$. More generally
\begin{defn}[Spectral gap]
    For $\mathcal{A}\in B(B(\mathcal{H}))$ with real spectrum, we denote by $\mathrm{gap}(\mathcal{A})$ the gap between the two largest eigenvalues.
\end{defn}

The state $\sigma$ can then be prepared by evolving with $\mathcal{L}^\dag$ for sufficiently long time. However, this strategy does not necessarily translate to the most efficient quantum algorithm. Here we will consider another strategy. 
We approximately implement a quantum channel $\Phi^\dag$, where $\Phi$ is defined through
\begin{equation}
\label{eq:detectability_lem_op}
    \Phi = P_1 P_2\cdots P_M,\quad P_m = \lim_{t\to \infty} e^{t\mathcal{L}_m},\quad \forall 1\leq m\leq M.
\end{equation}
We will use the detectability lemma to show that $\Phi^\dag$ works almost as well as $e^{t\mathcal{L}^\dag}$, as a sufficient number of rounds of $\Phi^\dag$ brings any initial state close to the stationary state $\sigma$.
Note that in the above we are using an arbitrary ordering that does not depend on the underlying geometry of the quantum system.

\section{The detectability lemma for Lindbladians}
\label{sec:detectability_lemma}

The detectability lemma was originally formulated for the Hamiltonian situation. We will restate it here.
\begin{lem}[The detectability lemma \cite{AnshuAradVidick2016simple}]
    \label{lem:detectability_lemma}
    Let $Q_1,Q_2,\cdots,Q_M$ be a set of projection operators and $H=\sum_{m=1}^M Q_m$. We assume that each $Q_m$ commutes with all but at most $g$ others. Given a state $\ket{\psi}$, define $\ket{\phi}=\prod_{m=1}^M (I-Q_m)\ket{\psi}$, where the product is in any order. Let $\epsilon_\phi = \braket{\phi|H|\phi}/\braket{\phi|\phi}$, then
    \[
    \braket{\phi|\phi}\leq \frac{1}{\epsilon_\phi/g^2+1}.
    \]
\end{lem}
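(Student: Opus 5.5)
We follow the short proof of Anshu, Arad and Vidick. Write $P_m = I-Q_m$ and fix the order, which after relabeling we take to be $\ket{\phi}=P_1P_2\cdots P_M\ket{\psi}$; the argument does not depend on the labeling. Assume $\|\psi\|=1$. We bound $\braket{\phi|H|\phi}=\sum_m \|Q_m\phi\|^2$ in terms of $\|\psi\|^2-\|\phi\|^2$, and then rearrange into the stated inequality.

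\textbf{Step 1: telescoping.} Define the intermediate states $\ket{\psi_{M+1}}=\ket{\psi}$ and $\ket{\psi_m}=P_m\ket{\psi_{m+1}}$, so that $\ket{\phi}=\ket{\psi_1}$. Since $P_m$ and $Q_m$ are complementary orthogonal projectors, $\|\psi_{m+1}\|^2=\|\psi_m\|^2+\|Q_m\psi_{m+1}\|^2$. Summing over $m$ gives
\[
\sum_{m=1}^M \|Q_m\psi_{m+1}\|^2 = 1-\braket{\phi|\phi}.
\]

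\textbf{Step 2: bounding $\|Q_m\phi\|$.} Write $\ket{\phi}=P_1\cdots P_{m-1}\ket{\psi_m}$. Split the prefix $P_1\cdots P_{m-1}$ into the factors $P_j$ that commute with $Q_m$ and the at most $g$ factors that do not. We would like to move $Q_m$ to the right past the commuting factors. The difficulty is that the noncommuting factors are interleaved with the commuting ones, so $Q_m$ cannot be pushed through directly.

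\textbf{Step 3: the AAV trick.} Let $j_1<\dots<j_k$ with $k\le g$ be the indices $j<m$ such that $P_j$ does not commute with $Q_m$. Apply the identity $P_j = I - Q_j$ only to these $P_{j_i}$ and telescope:
\[
Q_m\phi = Q_m\,(\text{commuting prefix})\,\psi_m + \sum_{i=1}^k (\text{bounded operator})\,Q_{j_i}\psi_{j_i+1}\text{-type terms}.
\]
In more detail, we peel off one noncommuting factor at a time. Writing $P_{j}=I-Q_j$ produces a remainder term. Each remainder has the form $Q_m A\,Q_{j}\,B\,\psi$, where $A$ has norm at most $1$ and $B\psi$ is one of the intermediate states, so its norm is at most $\|Q_{j}\psi_{j+1}\|$. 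The leading term is $Q_m C\psi_m$, where $C$ is a product of projectors commuting with $Q_m$. Hence it equals $CQ_m\psi_m = CQ_m P_m\psi_{m+1}=0$.

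We expect this step to be the main obstacle. The issue is bookkeeping: after deleting a noncommuting $P_j$, the product acting on $Q_j\cdots$ must still be exactly $Q_j\psi_{j+1}$ up to a contraction. This forces the peeling order to run from the leftmost noncommuting factor inward, with the commuting projectors commuted to the left of $Q_m$ at each stage. If this cannot be arranged cleanly, we would first try the original AAV formulation, which bounds $\|Q_m\phi\|$ using only the neighbors $j<m$ overlapping $m$.

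\textbf{Step 4: Cauchy--Schwarz and rearrangement.} Step 3 gives
\[
\|Q_m\phi\|\le\sum_{j\in N(m)}\|Q_j\psi_{j+1}\|,
\]
where $N(m)$ is the set of at most $g$ noncommuting neighbors of $m$. By Cauchy--Schwarz, $\|Q_m\phi\|^2\le g\sum_{j\in N(m)}\|Q_j\psi_{j+1}\|^2$. Summing over $m$, each $j$ appears in at most $g$ of the sets $N(m)$, so
\[
\braket{\phi|H|\phi}\le g^2\bigl(1-\braket{\phi|\phi}\bigr).
\]
Substituting $\braket{\phi|H|\phi}=\epsilon_\phi\braket{\phi|\phi}$ gives $\braket{\phi|\phi}(\epsilon_\phi/g^2+1)\le 1$, which is the claim.
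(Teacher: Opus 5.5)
Your proof is correct and is the Anshu--Arad--Vidick argument; the paper gives no proof of its own but cites that one, remarking only that frustration-freeness is never used, which your argument confirms. The Step~3 bookkeeping you worried about works as you describe: when you peel $P_1\cdots P_{m-1}$ from the left, everything to the right of the current noncommuting $P_j$ is untouched, so each remainder is exactly $-C\,Q_mQ_j\ket{\psi_{j+1}}$ with $C$ a product of projectors commuting with $Q_m$, which gives $\|Q_m\phi\|\le\sum_{j\in N(m)}\|Q_j\psi_{j+1}\|$.
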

Note that the above lemma does not put any constraint on the locality on the operators $Q_m$, but only on their commutation relations. 
This lemma provides us a means to quantify how much the detectability lemma operator $\prod_{m=1}^M (I-Q_m)$ shrinks any quantum state that is orthogonal to the ground space of $H$, as stated in the following corollary
\begin{cor}[Corollary~3 of \cite{AnshuAradVidick2016simple}]
\label{cor:shrink}
    Under the same assumptions as Lemma~\ref{lem:detectability_lemma}, and assuming that the smallest positive eigenvalue of $H$ is at least $\gamma$, let $\ket{\psi^\perp}$ be any quantum state that is orthogonal to $\ker(H)$, which can be trivial, then
    \[
    \Big\|\prod_{m=1}^M (I-Q_m)\ket{\psi^\perp}\Big\|^2\leq \frac{1}{\gamma/g^2+1}.
    \]
\end{cor}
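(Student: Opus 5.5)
Apply Lemma~\ref{lem:detectability_lemma} to the state $\ket{\psi^\perp}$ and convert its bound, which is stated in terms of the energy of the output $\ket{\phi}$, into one stated in terms of the gap $\gamma$. Set $\ket{\phi}=\prod_{m=1}^M (I-Q_m)\ket{\psi^\perp}$. If $\ket{\phi}=0$ the claim is trivial, so assume $\ket{\phi}\neq 0$. The lemma gives
\[
\braket{\phi|\phi}\leq \frac{1}{\epsilon_\phi/g^2+1},\qquad \epsilon_\phi=\frac{\braket{\phi|H|\phi}}{\braket{\phi|\phi}},
\]
so it suffices to show $\epsilon_\phi\geq\gamma$. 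Since $\epsilon_\phi$ is the Rayleigh quotient of the positive semidefinite $H$, this follows once we know $\ket{\phi}\perp\ker(H)$: on $\ker(H)^\perp$ every eigenvalue of $H$ is at least $\gamma$.

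The key step is therefore to show that the DL operator $D=\prod_m(I-Q_m)$ maps $\ker(H)^\perp$ into itself. Because each $Q_m\succeq 0$, a vector $\ket{v}$ lies in $\ker(H)$ if and only if $\braket{v|H|v}=\sum_m\|Q_m v\|^2=0$, that is, if and only if $Q_m\ket{v}=0$ for every $m$. Hence every $\ket{v}\in\ker(H)$ satisfies $(I-Q_m)\ket{v}=\ket{v}$ for all $m$, and so $D^\dag\ket{v}=\ket{v}$, since $D^\dag$ is the product of the same Hermitian factors in reversed order. It follows that
\[
\braket{v|\phi}=\braket{v|D|\psi^\perp}=\braket{D^\dag v|\psi^\perp}=\braket{v|\psi^\perp}=0 .
\]
So $\ket{\phi}\perp\ker(H)$, and we conclude $\epsilon_\phi\geq\gamma$.

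Finally, $x\mapsto 1/(x/g^2+1)$ is decreasing, so $\epsilon_\phi\geq\gamma$ gives $\|D\ket{\psi^\perp}\|^2\leq 1/(\gamma/g^2+1)$. The main, though mild, obstacle is the invariance step. It relies on frustration-freeness being automatic here: because $H$ is a sum of positive projectors, its ground states are annihilated by every $Q_m$. That is what lets $\ker(H)$ be fixed pointwise by $D^\dag$, regardless of the ordering of the product.
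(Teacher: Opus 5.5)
Your proof is correct and is essentially the argument the paper defers to (the proof of Corollary~3 in \cite{AnshuAradVidick2016simple}): every $Q_m$ annihilates $\ker(H)$, so $D^\dag$ fixes $\ker(H)$ pointwise and $D$ preserves $\ker(H)^\perp$, which gives $\epsilon_\phi\geq\gamma$ in Lemma~\ref{lem:detectability_lemma}. One wording fix in your closing remark: what is automatic is that vectors in $\ker(H)$ are annihilated by every $Q_m$, not that the ground states are. When $\ker(H)$ is trivial, $H$ is genuinely frustrated and your invariance step is simply vacuous, which is exactly why the paper can drop the frustration-free assumption.
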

We remark that the above corollary is slightly different from \cite[Corollary~3]{AnshuAradVidick2016simple}, in that we did not assume the Hamiltonian $H$ to be frustration free, i.e., its kernel can be trivial. This is because the proof of \cite[Lemma~2]{AnshuAradVidick2016simple} and \cite[Corollary~3]{AnshuAradVidick2016simple} in fact did not use the frustration-free property of $H$.

Coming back to the Lindbladian setting, we can similarly construct a Hamiltonian superoperator
\begin{equation}
    \label{eq:ham_superpop}
    H_{\mathcal{L}} = \sum_{m=1}^M (I-P_m),
\end{equation}
for $P_m$ defined in \eqref{eq:detectability_lem_op}. Here $I$ denotes the identity superoperator. This Hamiltonian superoperator $H_{\mathcal{L}}$ is self-adjoint under the $\sigma$-KMS inner product and can therefore be regarded as a Hamiltonian. $H_{\mathcal{L}}$ has a non-degenerate ground space $\spanop(I)$: on the one hand $H_{\mathcal{L}}(I)=0$, and on the other hand if $H_{\mathcal{L}}(X)=0$, we then have $P_m(X)=X$, which implies $\mathcal{L}_m(X)=0$ for all $m$, thus forcing $X\propto I$.

Since $\mathcal{L}_m$ has all non-positive eigenvalues between $0$ and $1$, we have $I-P_m\succeq -\mathcal{L}_m$ (in the $\sigma$-KMS inner product sense), and consequently $H_{\mathcal{L}}\succeq -\mathcal{L}$. Additionally, because $H_{\mathcal{L}}$ and $-\mathcal{L}$ share the same ground space ($\mathrm{span}(I)$), denoting the gap between $0$ and the smallest positive eigenvalue of $H_{\mathcal{L}}$ by $\gamma$, we have
\begin{equation}
    \gamma \geq \mathrm{gap}(\mathcal{L}).
\end{equation}
With this we can directly apply Corollary~\ref{cor:shrink} to $H_{\mathcal{L}}$ to obtain
\begin{thm}
    \label{thm:quantum_gibbs_sampler_shrink}
    Let $\mathcal{L}=\sum_{m=1}^M \mathcal{L}_m$, where each $\mathcal{L}_m$ is a Lindbladian that satisfies $\sigma$-KMS DBC, and we assume $\mathcal{L}$ has a non-degenerate $0$-eigenspace. We assume that each $\mathcal{L}_m$ commutes with all but at most $g$ others. Then let the quantum channel $\Phi$ be defined as in \eqref{eq:detectability_lem_op}. For any $X\in B(\mathcal{H})$ such that $\Tr[\sigma X]=0$, we have
    \[
    \|\Phi(X)\|_\sigma^2\leq \frac{\|X\|_\sigma^2}{\mathrm{gap}(\mathcal{L})/g^2+1}.
    \]
    Moreover $\Tr[\sigma \Phi(X)]=\Tr[\Phi^\dag(\sigma) X]=\Tr[\sigma X]=0$.
\end{thm}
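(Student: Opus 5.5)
We view $B(\mathcal{H})$ as a Hilbert space under the $\sigma$-KMS inner product and apply Corollary~\ref{cor:shrink} with $Q_m = I-P_m$. The proof has three parts: show the $Q_m$ meet the hypotheses of Corollary~\ref{cor:shrink}, identify the complement of the ground space with $\{X:\Tr[\sigma X]=0\}$, and lower bound the gap.

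\textbf{Step 1: each $P_m$ is a KMS-orthogonal projection.} Since $\mathcal{L}_m$ is KMS-self-adjoint, it diagonalizes in a $\braket{\cdot,\cdot}_\sigma$-orthonormal basis with real eigenvalues. These eigenvalues are non-positive, because $e^{t\mathcal{L}_m}$ is a contraction. Hence $e^{t\mathcal{L}_m}\to P_m$, the KMS-orthogonal projection onto $\ker\mathcal{L}_m$, so $P_m$ is self-adjoint and idempotent. The commutation hypothesis transfers directly: if $\mathcal{L}_m$ and $\mathcal{L}_{m'}$ commute, so do their spectral projections $P_m$ and $P_{m'}$. Thus each $Q_m$ commutes with all but at most $g$ others, and
\[
\Phi=\prod_m P_m=\prod_m (I-Q_m).
\]

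\textbf{Step 2: the complement of the ground space.} The text already shows $\ker H_{\mathcal{L}}=\spanop(I)$. The condition $\Tr[\sigma X]=0$ equals $\braket{I,X}_\sigma=\Tr[\sqrt\sigma X\sqrt\sigma]$, so it says exactly that $X$ is KMS-orthogonal to $\ker H_{\mathcal{L}}$.

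\textbf{Step 3: the gap.} We need the smallest positive eigenvalue $\gamma$ of $H_{\mathcal{L}}$ to satisfy $\gamma\ge\mathrm{gap}(\mathcal{L})$, which is the step requiring care. If $\lambda\le 0$ is an eigenvalue of $\mathcal{L}_m$, then $\lambda\ge -\|\mathcal{L}_m\|\ge -1$, using that the diamond-norm bound controls the spectrum. On each eigenvector, $I-P_m$ acts as $1$ when $\lambda<0$ and as $0$ when $\lambda=0$. In both cases its eigenvalue is at least $-\lambda$, so $I-P_m\succeq-\mathcal{L}_m$ in the KMS sense. Summing gives $H_{\mathcal{L}}\succeq-\mathcal{L}$. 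Restricting to $\spanop(I)^\perp$, where both operators are invariant, the min-max principle gives $\gamma\ge\mathrm{gap}(\mathcal{L})$.

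\textbf{Conclusion.} Corollary~\ref{cor:shrink}, applied to the normalized $X/\|X\|_\sigma$, yields
\[
\|\Phi(X)\|_\sigma^2\le \frac{\|X\|_\sigma^2}{\gamma/g^2+1}\le \frac{\|X\|_\sigma^2}{\mathrm{gap}(\mathcal{L})/g^2+1}.
\]

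\textbf{Trace condition.} The final claim follows because $\mathcal{L}_m^\dag(\sigma)=0$ gives $P_m^\dag(\sigma)=\sigma$. Hence $\Phi^\dag(\sigma)=P_M^\dag\cdots P_1^\dag(\sigma)=\sigma$, and therefore $\Tr[\sigma\Phi(X)]=\Tr[\Phi^\dag(\sigma)X]=\Tr[\sigma X]=0$.
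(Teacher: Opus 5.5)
Your proposal is correct and follows the paper's argument: you form $H_{\mathcal{L}}=\sum_m(I-P_m)$ in the KMS Hilbert space, use $\|\mathcal{L}_m\|_\diamond\le 1$ to get $I-P_m\succeq-\mathcal{L}_m$ and hence $\gamma\ge\mathrm{gap}(\mathcal{L})$ on $\spanop(I)^\perp=\{X:\Tr[\sigma X]=0\}$, and then invoke Corollary~\ref{cor:shrink}. You also spell out details the paper leaves implicit, and these are all sound: that each $P_m$ is a KMS-orthogonal projection, that commutation passes from $\mathcal{L}_m$ to $P_m$, the min-max step, and $P_m^\dag(\sigma)=\sigma$.
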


As a corollary of the above theorem, we can use the quantum channel $\Phi$ to prepare the Gibbs state $\sigma$ from any initial state $\rho_0$.
\begin{cor}
    \label{cor:prepare_gibbs_state_quantum_gibbs_sampler}
    Under the same assumptions as Theorem~\ref{thm:quantum_gibbs_sampler_shrink}, let $\rho_k=(\Phi^\dag)^k(\rho_0)$ for any initial state $\rho_0$, then
    \[
    \|\rho_k-\sigma\|_1\leq \frac{1}{(\mathrm{gap}(\mathcal{L})/g^2+1)^{k/2}\sqrt{\sigma_{\min}}},
    \]
    where $\sigma_{\min}$ is the smallest eigenvalue of $\sigma$.
\end{cor}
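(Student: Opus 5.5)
We derive the corollary from Theorem~\ref{thm:quantum_gibbs_sampler_shrink} by a duality argument. The trace distance is written as a supremum over bounded observables, the Heisenberg-picture channel $\Phi^k$ is moved onto the observable, and the observable is then split into its $\sigma$-mean and a fluctuation part.

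First, by duality, $\|\rho_k-\sigma\|_1=\sup_{\|O\|\le 1}\Tr[O(\rho_k-\sigma)]$. For fixed $O$ we have $\Tr[O\rho_k]=\Tr[\Phi^k(O)\rho_0]$, and since $\Phi^\dag(\sigma)=\sigma$ (each $P_m^\dag$ fixes $\sigma$), also $\Tr[O\sigma]=\Tr[\Phi^k(O)\sigma]$.

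Next we decompose $O=\Tr[\sigma O]\,I+X$ with $\Tr[\sigma X]=0$. Because $\Phi(I)=I$, the identity part cancels in the difference, leaving
\[
\Tr[O(\rho_k-\sigma)]=\Tr[\Phi^k(X)\rho_0]-\Tr[\Phi^k(X)\sigma]=\Tr[\Phi^k(X)\rho_0],
\]
where the second term vanishes because Theorem~\ref{thm:quantum_gibbs_sampler_shrink} gives $\Tr[\sigma\Phi^k(X)]=0$. That theorem also preserves the mean-zero property, so it can be iterated $k$ times:
\[
\|\Phi^k(X)\|_\sigma\le (\mathrm{gap}(\mathcal{L})/g^2+1)^{-k/2}\|X\|_\sigma .
\]
Moreover, $\|X\|_\sigma^2=\|O\|_\sigma^2-|\Tr[\sigma O]|^2\le\|O\|_\sigma^2\le\|O\|^2\le 1$, using $\Tr[O^\dag\sqrt\sigma O\sqrt\sigma]\le\|O\|^2$.

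The remaining step is to convert the $\sigma$-KMS norm bound into a bound on $|\Tr[Y\rho_0]|$ with $Y=\Phi^k(X)$. We write
\[
\Tr[Y\rho_0]=\Tr\big[(\sigma^{1/4}Y\sigma^{1/4})(\sigma^{-1/4}\rho_0\sigma^{-1/4})\big].
\]
Applying Cauchy–Schwarz in the Hilbert–Schmidt inner product gives the factor $\|\sigma^{1/4}Y\sigma^{1/4}\|_2=\|Y\|_\sigma$ times $\|\sigma^{-1/4}\rho_0\sigma^{-1/4}\|_2$. For the second factor,
\[
\|\sigma^{-1/4}\rho_0\sigma^{-1/4}\|_2^2=\Tr[\rho_0\sigma^{-1/2}\rho_0\sigma^{-1/2}]\le\|\sigma^{-1/2}\|^2\,\|\rho_0\|_2^2\le\sigma_{\min}^{-1}.
\]
This follows from Hölder's inequality and $\|\rho_0\|_2\le\Tr\rho_0=1$, so the second factor is at most $\sigma_{\min}^{-1/2}$.

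Combining these gives $|\Tr[O(\rho_k-\sigma)]|\le(\mathrm{gap}(\mathcal{L})/g^2+1)^{-k/2}\sigma_{\min}^{-1/2}$ for every $\|O\|\le 1$, and taking the supremum yields the claim. If $O$ is not Hermitian, we restrict to Hermitian $O$ without loss of generality, since $\rho_k-\sigma$ is Hermitian.

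The main point requiring care is the conversion from the KMS norm to a trace bound, which needs the symmetric $\sigma^{1/4}$ splitting to produce exactly the factor $\sigma_{\min}^{-1/2}$ rather than a worse power. A secondary check is that the mean-zero condition $\Tr[\sigma\,\cdot\,]=0$ is preserved at each of the $k$ iterations of Theorem~\ref{thm:quantum_gibbs_sampler_shrink}.
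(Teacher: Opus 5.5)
Your proposal is correct and follows the paper's proof essentially step for step. You center the observable, move $\Phi^k$ onto it in the Heisenberg picture, iterate Theorem~\ref{thm:quantum_gibbs_sampler_shrink}, and finish with Cauchy--Schwarz; your $\sigma^{1/4}$-split Hilbert--Schmidt inequality is exactly the paper's Cauchy--Schwarz in the $\sigma$-KMS inner product against $\sigma^{-1/2}\rho_0\sigma^{-1/2}$. In addition, your H\"older argument and centering computation prove $\Tr[\rho_0\sigma^{-1/2}\rho_0\sigma^{-1/2}]\le\sigma_{\min}^{-1}$ and $\|X\|_\sigma\le\|O\|$, which the paper only asserts.
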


\begin{proof}
    Let $X\in B(\mathcal{H})$ satisfy $\|X\|\leq 1$. Denote $\Tilde{X} = X - \Tr[\sigma X] I$, and we have $\|\Tilde{X}\|_\sigma\leq \|X\|_\sigma\leq \|X\|\leq 1$. This ensures $\Tr[\sigma\tilde{X}]=0$. We then compute
    \[
    \Tr[X(\rho_k-\sigma)] = \Tr[\Tilde{X}(\rho_k-\sigma)]=\Tr[\Tilde{X}\rho_k]=\Tr[\Tilde{X}(\Phi^\dag)^k(\rho_0)] = \Tr[\Phi^k(\Tilde{X})\rho_0].
    \]
    By the Cauchy-Schwarz inequality, we have
    \[
    |\Tr[\Phi^k(\Tilde{X})\rho_0]|=|\braket{\Phi^k(\Tilde{X}),\sigma^{-1/2}\rho_0\sigma^{-1/2}}_\sigma|\leq \|\Phi^k(\Tilde{X})\|_\sigma\|\sigma^{-1/2}\rho_0\sigma^{-1/2}\|_\sigma.
    \]
    For $\|\Phi^k(\Tilde{X})\|_\sigma$ we directly apply Theorem~\ref{thm:quantum_gibbs_sampler_shrink} $k$ times and use the fact that $\|\Tilde{X}\|_\sigma\leq 1$ and $\Tr(\sigma \Tilde{X}) = 0$. For $\|\sigma^{-1/2}\rho_0\sigma^{-1/2}\|_\sigma$ we have
    \[
    \|\sigma^{-1/2}\rho_0\sigma^{-1/2}\|_\sigma = \Tr[\rho_0\sigma^{-1/2}\rho_0\sigma^{-1/2}]^{1/2}\leq \frac{1}{\sqrt{\sigma_{\min}}}.
    \]
    Therefore we have
    \[
    |\Tr[X(\rho_k-\sigma)]|\leq \frac{1}{(\mathrm{gap}(\mathcal{L})/g^2+1)^{k/2}\sqrt{\sigma_{\min}}}.
    \]
    Because this is true for any $\|X\|\leq 1$, we have the desired inequality for the trace distance between $\rho_k$ and $\sigma$ through the duality between the Schatten $1$- and $\infty$-norms.
\end{proof}

We will next discuss the gate complexity of preparing Gibbs states using the above approach. We assume that each $\mathcal{L}_m$ acts non-trivially only on $\mathsf{k}=\Or(1)$ qubits. To prepare the Gibbs state $\sigma$ that is the fixed point of $\mathcal{L}$ as discussed above, we repeatedly apply the quantum channel $\Phi^\dag$. Each application of $\Phi^\dag$ requires implementing all $M$ channels $P_m$. Since each $P_m$ acts non-trivially only on $\mathsf{k}$ qubits, we can compile a unitary circuit involving at most $\mathsf{k}$ qubits to implement it. Solovay-Kitaev Theorem then tells us that $\Or(\log^c(1/\epsilon'))$ gates are needed to implement $P_m$ to accuracy $\epsilon'$ in the diamond norm.

Corollary~\ref{cor:prepare_gibbs_state_quantum_gibbs_sampler} tells us that in order to prepare the state to $\Or(\epsilon)$ accuracy in trace distance, it suffices to have $k=\Or(\frac{g^2}{\mathrm{gap}(\mathcal{L})}\log(1/(\sigma_{\min}\epsilon)))$. The accumulated unitary synthesis error in the whole procedure is at most $\Or(kM\epsilon')$, for which we can choose $kM\epsilon'=\Or(\epsilon)$ to ensure that it contributes only $\Or(\epsilon)$ to the total error. Therefore the gate complexity for the whole procedure is upper bounded by
\begin{equation}
    \label{eq:gate_complexity_cyclic}
    \Or\left(\frac{M g^2}{\mathrm{gap}(\mathcal{L})}\log(1/(\sigma_{\min}\epsilon))\log^c(kM/\epsilon)\right)=\wt{\Or}\left(\frac{M g^2}{\mathrm{gap}(\mathcal{L})}\log\left(\frac{1}{\sigma_{\min}\epsilon}\right)\log^c\left(\frac{M}{\epsilon}\right)\right)
\end{equation}
We therefore arrive at the following corollary
\begin{cor}
\label{cor:gate_complexity}
    Let $\mathcal{L}=\sum_{m=1}^M \mathcal{L}_m$, where each $\mathcal{L}_m$ is a Lindbladian that satisfies $\sigma$-KMS DBC, and we assume $\mathcal{L}$ has a non-degenerate $0$-eigenspace. We assume that each $\mathcal{L}_m$ commutes with all but at most $g$ others. Then we can prepare $\sigma$ up to $\epsilon$ error in trace distance using $\wt{\Or}\left(\frac{M g^2}{\mathrm{gap}(\mathcal{L})}\log\left(\frac{1}{\sigma_{\min}\epsilon}\right)\log^c\left(\frac{M}{\epsilon}\right)\right)$ elementary single- and two-qubit gates, where $\sigma_{\min}$ is the minimum eigenvalue of $\sigma$, and $c$ is a constant from the Solovay-Kitaev theorem.
\end{cor}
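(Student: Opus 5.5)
The plan is to assemble Corollary~\ref{cor:prepare_gibbs_state_quantum_gibbs_sampler} with a gate-synthesis error analysis, following the discussion preceding the statement. First, fix the number of rounds $k$. Write $\lambda = \mathrm{gap}(\mathcal{L})/g^2$, and note that $\lambda\le 1$ since $\|\mathcal{L}_m\|_\diamond\le 1$ and $g\ge 1$. Then $\log(1+\lambda)\ge \lambda/2$, so $(1+\lambda)^{-k/2}\le e^{-k\lambda/4}$. Choosing
\[
k=\left\lceil \frac{4g^2}{\mathrm{gap}(\mathcal{L})}\log\frac{2}{\epsilon\sqrt{\sigma_{\min}}}\right\rceil=\Or\left(\frac{g^2}{\mathrm{gap}(\mathcal{L})}\log\frac{1}{\sigma_{\min}\epsilon}\right)
\]
makes the exact iterate satisfy $\|(\Phi^\dag)^k(\rho_0)-\sigma\|_1\le\epsilon/2$ by Corollary~\ref{cor:prepare_gibbs_state_quantum_gibbs_sampler}. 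Any initial state works, e.g.\ $\rho_0=\ket{0}\bra{0}^{\otimes n}$.

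Second, implement each $P_m^\dag$. Because $\mathcal{L}_m$ is KMS-detailed balanced, it is diagonalizable with real nonpositive spectrum. Hence $P_m=\lim_{t\to\infty}e^{t\mathcal{L}_m}$ exists; it is the KMS-orthogonal projection onto $\ker\mathcal{L}_m$ and is a unital CP map. Its adjoint $P_m^\dag$ is therefore a CPTP channel acting on only $\mathsf{k}=\Or(1)$ qubits. By Stinespring dilation with $\Or(\mathsf{k})$ ancilla qubits, which are reset after each use, $P_m^\dag$ is realized by a unitary on $\Or(1)$ qubits. That unitary decomposes exactly into $\Or(1)$ two-qubit gates, and Solovay--Kitaev approximates each of these to precision $\epsilon'$ with $\Or(\log^c(1/\epsilon'))$ elementary gates. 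Since ancilla reset and tracing out are contractive in trace norm, the implemented channel $\widetilde{P}_m^\dag$ satisfies $\|\widetilde{P}_m^\dag-P_m^\dag\|_\diamond=\Or(\epsilon')$.

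Third, bound the accumulated error. One round $\Phi^\dag=P_M^\dag\cdots P_1^\dag$ consists of $M$ channels, so the full procedure is a composition of $kM$ CPTP maps. A telescoping argument, using that CPTP maps are contractive in diamond norm, gives total deviation at most $kM\cdot\Or(\epsilon')$. Choosing $\epsilon'=\Theta(\epsilon/(kM))$ keeps this below $\epsilon/2$, so the final trace-distance error is at most $\epsilon$. The gate count is $kM\cdot\Or(\log^c(kM/\epsilon))$. Since $k$ is polynomial in $g$, $1/\mathrm{gap}(\mathcal{L})$ and $\log(1/(\sigma_{\min}\epsilon))$, the factor $\log^c(kM/\epsilon)$ equals $\log^c(M/\epsilon)$ up to polylogarithmic factors in those quantities, which the $\wt{\Or}$ absorbs. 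This yields the stated bound.

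I expect the main obstacle to be the second step. One must confirm that $P_m^\dag$ is a legitimate channel localized on $\mathsf{k}$ qubits, which follows from the support of $\mathcal{L}_m$ together with the limit, and that a constant-size dilation exists, so that the Solovay--Kitaev cost per channel depends only on $\epsilon'$. One must also state explicitly that the synthesis depends only on the classical description of $P_m$, which is computed offline. The remaining steps are routine bookkeeping.
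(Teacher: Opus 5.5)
Your proposal follows the paper's own argument. You read $k$ off Corollary~\ref{cor:prepare_gibbs_state_quantum_gibbs_sampler}, compile each $\mathsf{k}$-local channel $P_m^\dag$ via Solovay--Kitaev to accuracy $\epsilon'$, telescope the $kM$ synthesis errors, and set $\epsilon'=\Theta(\epsilon/(kM))$, which gives $kM\cdot\Or(\log^c(kM/\epsilon))$ gates. Your argument that $P_m^\dag$ is a $\mathsf{k}$-local CPTP map with a constant-size dilation is more explicit than the paper's, and it is correct.

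The one step that is wrong as written is the claim $\lambda=\mathrm{gap}(\mathcal{L})/g^2\le 1$. The hypothesis $\|\mathcal{L}_m\|_\diamond\le 1$ only puts the spectrum of each \emph{individual} term in $[-1,0]$. It does not bound the gap of the sum against $g^2$, because $g$ only counts non-commuting pairs. For example, take $\sigma$ maximally mixed and the two non-commuting two-qubit terms $A\mapsto\frac14(VAV-A)$, with $V=Z\otimes Z$ and with $V$ a Hadamard gate on the first qubit; this gives $g=1$. Now add $N$ copies of the normalized single-site depolarizing generator on every site. These commute with every term, so $g$ stays $1$, while $\mathrm{gap}(\mathcal{L})\ge N/2$. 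For large $\lambda$ your inequality $(1+\lambda)^{-k/2}\le e^{-k\lambda/4}$ fails, so your choice of $k$ can be too small.

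The repair is easy. Use $\log(1+\lambda)\ge \min\{\lambda,1\}\log 2$, which gives $k=\Or\bigl((1+g^2/\mathrm{gap}(\mathcal{L}))\log\frac{1}{\sigma_{\min}\epsilon}\bigr)$. This matches the claimed bound whenever $\mathrm{gap}(\mathcal{L})=\Or(g^2)$. That is exactly the regime the paper tacitly assumes when it writes $k=\Or\bigl(\frac{g^2}{\mathrm{gap}(\mathcal{L})}\log\frac{1}{\sigma_{\min}\epsilon}\bigr)$. Outside that regime the stated expression can fall below $M$, the cost of a single round.

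Alternatively, adopt the overlap-degree convention of Section~\ref{sec:problem_setup}. Testing $-\mathcal{L}$ on a single-site operator then gives $\mathrm{gap}(\mathcal{L})\le g+1$, hence $\lambda\le 2$. Your claim $\lambda\le 1$ can still fail here: two identical copies of a normalized single-qubit depolarizing generator give $\lambda=4/3$. But $\log(1+\lambda)\ge\lambda/2$ holds on all of $[0,2]$, so your choice of $k$ survives unchanged.
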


Given the Lindbladian $\mathcal{L}$, the natural way to prepare $\sigma$ is to simply simulate the time evolution $e^{t\mathcal{L}}$ to bring an arbitrary initial state close to $\sigma$. The required evolution time is $\Or(\frac{1}{\mathrm{gap}(\mathcal{L})}\log(\frac{1}{\sigma_{\min}\epsilon}))$. Using state-of-the-art Lindbladian simulation algorithms such as \cite{DingLiLin2024simulating}, where the Lindbladian needs to be rescaled to have diamond norm $\Or(1)$, the gate complexity scales as $\Or(\frac{M^2}{\mathrm{gap}(\mathcal{L})}\log(\frac{1}{\sigma_{\min}\epsilon}))$ ignoring further poly-logarithmic factors. Therefore the method in Corollary~\ref{cor:gate_complexity} reduces the cost by a factor of $M$ up to poly-logarithmic corrections.

\section{Implementing ground state projection operators}
\label{sec:implementing_ground_state_projection_operators}

In this section we will take a detour to discuss how one can implement the ground state projection operator of frustration-free Hamiltonians. In later sections we will discuss how this implementation leads to a quadratic speedup in terms of the spectral gap dependence in a simulated annealing procedure for quantum Gibbs states corresponding to commuting local Hamiltonians.

\begin{defn}
\label{defn:bounded_degree_local}
    A Hamiltonian $H = \sum_{m=1}^M H_m$
    acting on \(n\) qubits is called a \emph{bounded-degree $\mathsf{k}$-local Hamiltonian} if:
    (1) each term \(H_j\) acts non-trivially on at most $\mathsf{k}$ qubits, where \(\mathsf{k}=\mathcal{O}(1)\), and
    (2) there exists a constant \(g=\mathcal{O}(1)\) such that for every qubit \(i\in\{1,\dots,n\}\), at most \(g\) terms \(H_j\) act non-trivially on \(i\).
\end{defn}

We focus on such a bounded-degree local Hamiltonian on $n$ qubits 
\begin{equation}
    \label{eq:general_frsutration_free_ham}
    H = \sum_{m=1}^M H_m,
\end{equation}
where each $H_m$ satisfies $0\preceq H_m\preceq I$, is supported on at most $\mathsf{k}$ qubits, and overlaps with at most $g$ other terms $H_{m'}$. The terms $H_m$ do not necessarily commute with each other. Moreover, we assume that $H$ is frustration free in the following sense: let $P_H$ be the projection operator into the ground space of $H$, and then we have 
\begin{equation}
    \label{eq:frustration_free_ham}
    P_H H_m  = 0,\quad\forall m\in\{1,2,\cdots,M\}.
\end{equation}
Note that the above also implies $H_m P_H=0$, and that the ground space of $H$ is also its kernel. We also denote the dimension of the ground space by $\mathsf{r}$.

For the above Hamiltonian, we will implement its ground projection operator using the detectability lemma operator
\begin{equation}
    \label{eq:detectability_lemma_op_ham}
    \mathrm{DL}(H) = \prod_{m=1}^M P_m.
\end{equation}
Note that by applying Corollary~\ref{cor:shrink} to $\sum_m (I-P_m)$, we have, for any $\ket{\psi^\perp}$ such that $P_H \ket{\psi^\perp}=0$, 
\begin{equation}
    \label{eq:shrink_detectability_lemma_ham}
    \|\mathrm{DL}(H)\ket{\psi^\perp}\|^2\leq \frac{1}{\gamma/g^2+1},
\end{equation}
where $\gamma$ is the spectral gap separating $0$ from the rest of the spectrum of $H$. Note that the above detectability lemma operator $\mathrm{DL}(H)$ can be implemented with gate complexity $\Or(M)$ assuming that $\mathsf{k},g=\Or(1)$. Repeatedly applying $\mathrm{DL}(H)$ $\Or(\gamma^{-1}\log(1/\epsilon))$ times can take the resulting product $\epsilon$-close to the ground projection operator $P_H$, thus resulting in a $\Or(M\gamma^{-1}\log(1/\epsilon))$ gate complexity for implementing the projection operator. Below we will show how a $\gamma^{-1/2}$ scaling can be achieved using quantum singular value transformation (QSVT) \cite{Gilyen_2019singularvalue}.

To discuss QSVT we need to first introduce the notion of block encoding:
\begin{defn}[Block encoding]
\label{defn:block_encoding}
    An $(m+n)$-qubit unitary operator $U$ is called an $(\alpha, m, \epsilon)$-block-encoding of an $n$-qubit operator $A$, if 
    \begin{equation}
    \norm{A-\alpha(\bra{0^m}\otimes I) U (\ket{0^m}\otimes I)}\leq \epsilon.
    \label{eqn:block_encoding}
    \end{equation}
\end{defn}
An equivalent way to express \eqref{eqn:block_encoding} is 
\[
U=\begin{pmatrix}
\wt{A}/\alpha & * \\
* & *
\end{pmatrix},
\]
where $*$ can be any block matrices of the correct sizes and $\|\wt{A}-A\|\leq \epsilon$.

We first note that, because each $P_m$ can be implemented via its $(1,1,0)$-block encoding, these block encodings can be composed to implement a $(1,M,0)$-block encoding of $\mathrm{DL}(H)$. Using the compression gadget in \cite{LowWiebe2018hamiltonian,FangLinTong2023time} we can further reduce the number of ancilla qubit to $\Or(\log(M))$, thus resulting in a $(1,\Or(\log(M)),0)$-block encoding of $\mathrm{DL}(H)$. Note that each implementation of $P_m$ may involve exponentially small errors due to unitary synthesis, but these errors can be suppressed with polylogarithmic overhead, and their accumulation throughout the algorithm is well-controlled, as will be discussed in Corollary~\ref{cor:gate_complexity_approximate_proj}.

We then write down the singular value decomposition of $\mathrm{DL}(H)$:
\begin{equation}
    \mathrm{DL}(H) = U S V^\dag,
\end{equation}
where $S = \mathrm{diag}(s_1,s_2,\cdots,s_{2^n})$, $s_1\geq s_2\geq\cdots \geq s_{2^n}$.
Since the ground space of $H$ is $\mathsf{r}$-dimensional, we have $s_1=\cdots=s_{\mathsf{r}}=1$. By \eqref{eq:shrink_detectability_lemma_ham}, we have
\begin{equation}
    \label{eq:singular_val_bound}
    0\leq s_j\leq \frac{1}{\sqrt{\gamma/g^2+1}}, \quad j=\mathsf{r}+1,\cdots,2^n.
\end{equation}
We denote by $\gamma_*$ the lower bound of the gap separating the above $s_j$ from $1$:
\begin{equation}
    \gamma_* = 1-\frac{1}{\sqrt{\gamma/g^2+1}} = \Omega(\gamma).
\end{equation}

The projection operator $P_H$ can in fact be expressed in terms of $U$ and $V$. We write 
\begin{equation}
\label{eq:U1_V1}
    U = \begin{pmatrix}
    U_1 & U_2
    \end{pmatrix},
    \quad
    V = \begin{pmatrix}
        V_1 & V_2
    \end{pmatrix},
\end{equation}
where $U_1$ and $V_1$ are both of size $2^n\times \mathsf{r}$. In other words, the columns of $U_1$ and $V_1$ correspond to the singular values $s_1,\cdots,s_{\mathsf{r}}$. Then we claim
\begin{lem}
    \label{lem:projection_operator_in_terms_of_UV}
    For $U_1$ and $V_1$ defined above, we have $P_H= U_1 V_1^\dag$.
\end{lem}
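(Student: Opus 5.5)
The plan is to show that $\mathrm{DL}(H)$ acts as the identity on the ground space $\mathcal{G}=\mathrm{ran}(P_H)$ and maps $\mathcal{G}^\perp$ into $\mathcal{G}^\perp$. The top singular subspaces will then be identified with $\mathcal{G}$ itself, which forces $U_1V_1^\dag = P_H$.

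First, we establish two invariance properties. By frustration-freeness \eqref{eq:frustration_free_ham}, $H_mP_H=0$ for all $m$, so every ground state $\ket{g}\in\mathcal{G}$ lies in the kernel of each $H_m$. Here $P_m$ is the projector onto the kernel of $H_m$ (the local ground space), so $P_m\ket{g}=\ket{g}$ for every $m$, and hence $\mathrm{DL}(H)\ket{g}=\ket{g}$. Since each $P_m$ is Hermitian, $\mathrm{DL}(H)^\dag = P_M\cdots P_1$ also fixes $\mathcal{G}$ pointwise. Consequently $\mathrm{DL}(H)P_H=P_H$ and $P_H\,\mathrm{DL}(H)=(\mathrm{DL}(H)^\dag P_H)^\dag=P_H$. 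The second identity shows that for $\ket{\psi^\perp}\in\mathcal{G}^\perp$ we have $P_H\mathrm{DL}(H)\ket{\psi^\perp}=P_H\ket{\psi^\perp}=0$, so $\mathcal{G}^\perp$ is invariant under $\mathrm{DL}(H)$. Thus $\mathrm{DL}(H)=P_H+(I-P_H)\mathrm{DL}(H)(I-P_H)$ is block diagonal with respect to $\mathcal{G}\oplus\mathcal{G}^\perp$.

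Next, we read off the singular value decomposition. On $\mathcal{G}$ the operator is the identity, with $\mathsf{r}$ singular values equal to $1$. On $\mathcal{G}^\perp$, \eqref{eq:shrink_detectability_lemma_ham} gives operator norm at most $(\gamma/g^2+1)^{-1/2}<1$, assuming $\gamma>0$. So the singular values equal to $1$ come exactly from the first block, and the $\mathsf{r}$-dimensional left and right singular subspaces for singular value $1$ are both $\mathcal{G}$.

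The main subtlety is that singular vectors in a degenerate subspace are not unique. We need $U_1V_1^\dag=P_H$ for any valid choice, not merely for a convenient one. To handle this, note that $V_1$ has orthonormal columns spanning $\mathcal{G}$, and $U_1 = \mathrm{DL}(H)V_1$ because $s_1=\cdots=s_{\mathsf{r}}=1$. Since $\mathrm{DL}(H)$ acts as the identity on $\mathcal{G}$, this gives $U_1=V_1$. Therefore $U_1V_1^\dag = V_1V_1^\dag$, which is the orthogonal projector onto $\mathcal{G}$, namely $P_H$.
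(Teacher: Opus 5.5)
Your proof is correct, but it establishes the key inclusion in the opposite direction from the paper. You start from the ground space $\mathcal{G}$. Frustration-freeness makes each $P_m$, and hence both $\mathrm{DL}(H)$ and $\mathrm{DL}(H)^\dag$, fix $\mathcal{G}$ pointwise, which gives the block form $\mathrm{DL}(H)=P_H+(I-P_H)\mathrm{DL}(H)(I-P_H)$. You then use the detectability-lemma estimate \eqref{eq:shrink_detectability_lemma_ham} to exclude singular value $1$ on $\mathcal{G}^\perp$, and you handle the non-uniqueness of singular vectors through $U_1=\mathrm{DL}(H)V_1=V_1$.

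The paper instead starts from $\mathrm{col}(V_1)$. A vector with $\|\mathrm{DL}(H)\ket{\phi}\|=\|\ket{\phi}\|$ must be fixed by every factor $P_m$, because a product of orthogonal projections can preserve a norm only if each factor fixes the vector. Hence $\mathrm{col}(V_1)\subseteq\mathcal{G}$, and since both spaces have dimension $\mathsf{r}$ they coincide. The paper then checks that $U_1V_1^\dag$ and $P_H$ agree on $\mathrm{col}(V_1)$ and on its orthogonal complement.

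The paper's route is purely algebraic and never needs the quantitative gap bound. The same norm-preservation argument shows that singular value $1$ has multiplicity exactly $\mathsf{r}$ without the detectability lemma, and it applies to whatever $V_1$ was chosen. Your route uses the heavier detectability-lemma estimate. In return it makes the structure $\mathrm{DL}(H)=P_H\oplus(\text{strict contraction on }\mathcal{G}^\perp)$ explicit, from which $U_1=V_1$ and the later bound on $\|p(S)-D\|$ follow directly.
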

\begin{proof}
    For any $\ket{\phi}\in \mathrm{col}(V_1)$, we have $\|\mathrm{DL}(H)\ket{\phi}\|=\|\ket{\phi}\|$, and therefore $\|P_m\ket{\phi}\|=\|\ket{\phi}\|$ for all $m$. Since each $P_m$ is a projection operator, we then have $P_m\ket{\phi}=\ket{\phi}$. Consequently
    \[
    U_1 V_1^\dag\ket{\phi}=\mathrm{DL}(H)\ket{\phi}=\ket{\phi},
    \]
    and $\ket{\phi}$ is in the ground space of $H$. As a result of the latter $P_H\ket{\phi}=\ket{\phi}$. Therefore $P_H$ and $U_1V_1^\dag$ agree in $\mathrm{col}(V_1)$.
    
    If $\ket{\phi}\in \mathrm{col}(V_1)^\perp$, then $\ket{\phi}$ must be orthogonal to the ground space since the ground space is contained in $\mathrm{col}(V_1)$. Therefore $P_H\ket{\phi}=U_1V_1^\dag\ket{\phi}=0$. Combining this with the discussion in the previous paragraph we have $P_H=U_1 V_1^\dag$.
\end{proof}

For a polynomial $p(x)$ of degree $d$ with fixed parity satisfying $|p(x)|\leq 1$ for all $x\in[-1,1]$, QSVT allows us to construct a $(1,\Or(\log(M)),0)$-block encoding of 
\begin{equation}
    \tilde{P}_H = U p(S) V^\dag,
\end{equation}
with $d$ queries to the block encoding of $\mathrm{DL}(H)$. We will appropriately choose $p(x)$ so that $\tilde{P}_H$ approximates $P_H$. For this purpose, the polynomial in \cite{ThibodeauClark2023nearly} suffices, but we will provide a simpler construction by simply rescaling the Chebyshev polynomial. We let 
\begin{equation}
    \label{eq:choice_for_poly_p}
    p(x) = \frac{T_{\ell}\left(x/(1-\gamma_*)\right)}{T_{\ell}\left(1/(1-\gamma_*)\right)},
\end{equation}
where $T_\ell$ is the $\ell$th Chebyshev polynomial of the first kind, and is of degree $\ell$ with fixed parity. It satisfies $|p(x)|\leq 1$ for $x\in[-1,1]$, $p(1)=1$, and when $x\in[-1+\gamma_*,1-\gamma_*]$, we have
\[
|p(x)|\leq \frac{1}{T_{\ell}\left(1/(1-\gamma_*)\right)}\leq 2e^{-\ell \sqrt{1/(1-\gamma_*)-1}}\leq 2e^{-\ell\sqrt{\gamma_*}},
\]
where the second inequality comes from \cite[Lemma~13]{LinTong2020optimal}. From this we have
\[
\|p(S)-D\|\leq 2e^{-\ell\sqrt{\gamma_*}},
\]
where $D=\mathrm{diag}(1,\cdots,1,0,\cdots,0)$, with $\mathsf{r}$ diagonal entries being $1$. The above implies
\begin{equation}
    \label{eq:approximate_proj_op_error_bound}
    \|\tilde{P}_H-P_H\|=\|\tilde{P}_H-U_1 V_1^\dag\|=\|\tilde{P}_H-UDV^\dag\|=\|p(S)-D\|\leq 2e^{-\ell\sqrt{\gamma_*}}.
\end{equation}
In order to ensure $\|\tilde{P}_H-P_H\|\leq \epsilon$, we only need
\begin{equation}
    \label{eq:choice_of_ell}
    \ell = \Or(\gamma_*^{-1/2}\log(1/\epsilon))=\Or(\gamma^{-1/2}\log(1/\epsilon)).
\end{equation}
Therefore we have the following theorem:
\begin{thm}
    \label{thm:block_encode_proj}
    Let $H = \sum_{m=1}^M H_m$, where each $H_m$ satisfies $0\preceq H_m\preceq I$, is supported on at most $\mathsf{k}$ qubits, and overlaps with at most $g$ other terms $H_{m'}$. We assume that $H$ is frustration-free in the sense that $P_H H_m=0$ for all $m=1,2,\cdots,M$, where $P_H$ is its ground space projection operator. 
    We assume that each ground projection operator of $H_m$ is accessed through its $(1,\Or(1),0)$-block encoding.
    Then a $(1,\Or(\log(M)),\epsilon)$-block encoding of $P_H$ can be obtained using $\Or(\gamma^{-1/2}\log(1/\epsilon))$ queries to each block encoding of the ground space projection operator of $H_m$.
\end{thm}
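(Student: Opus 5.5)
The plan is to assemble the pieces developed just above the statement into a single block-encoding construction and then check the query count and the error.

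First, I build a block encoding of the detectability lemma operator $\mathrm{DL}(H)=\prod_{m=1}^M P_m$. Each $P_m$ is available through its $(1,\Or(1),0)$-block encoding. Multiplying these block encodings in sequence, each with its own ancilla register, gives a $(1,\Or(M),0)$-block encoding of $\mathrm{DL}(H)$. Here $P_m$ is the ground projector of $H_m$, and frustration-freeness guarantees that its range contains the global ground space. The compression gadget of \cite{LowWiebe2018hamiltonian,FangLinTong2023time} then reduces the ancilla count to $\Or(\log M)$. This uses one query to each $P_m$ per application of $\mathrm{DL}(H)$.

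Next, I establish the spectral picture of $\mathrm{DL}(H)$. Since $P_H P_m = P_H$ by \eqref{eq:frustration_free_ham}, $\mathrm{DL}(H)$ acts as the identity on the ground space, so $s_1=\cdots=s_{\mathsf r}=1$. Applying Corollary~\ref{cor:shrink} to $\sum_m (I-P_m)$ gives the bound \eqref{eq:singular_val_bound} on the remaining singular values. For this I need two facts:
\begin{itemize}
\item The operators $I-P_m$ commute with all but at most $g$ others. This holds because non-overlapping supports imply commutation.
\item The smallest positive eigenvalue of $\sum_m(I-P_m)$ is at least $\gamma$, up to constants. This follows from $I-P_m \succeq H_m$, which holds because $0\preceq H_m\preceq I$ and $I-P_m$ is the projector onto the support of $H_m$. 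Combined with the fact that both operators have the same kernel, namely the ground space, this gives the claim.
\end{itemize}
Lemma~\ref{lem:projection_operator_in_terms_of_UV} then gives $P_H=U_1V_1^\dag = UDV^\dag$.

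Then I apply QSVT with the rescaled Chebyshev polynomial \eqref{eq:choice_for_poly_p}, of degree $\ell$ with definite parity and bounded by $1$ on $[-1,1]$. QSVT produces an exact $(1,\Or(\log M),0)$-block encoding of $Up(S)V^\dag$ using $\ell$ queries to the block encoding of $\mathrm{DL}(H)$ and its adjoint, plus $\Or(\ell)$ additional gates. By \eqref{eq:approximate_proj_op_error_bound}, $\|Up(S)V^\dag-P_H\|\le 2e^{-\ell\sqrt{\gamma_*}}$. Choosing $\ell$ as in \eqref{eq:choice_of_ell} with $\gamma_*=\Omega(\gamma)$ makes this error at most $\epsilon$. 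The result is therefore a $(1,\Or(\log M),\epsilon)$-block encoding of $P_H$.

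Finally, I count queries. Each query to $\mathrm{DL}(H)$ or $\mathrm{DL}(H)^\dag$ costs one query to each $P_m$ block encoding or its inverse, so the total is $\Or(\gamma^{-1/2}\log(1/\epsilon))$ queries to each.

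The main obstacle is the spectral-gap step: verifying that $\gamma_*=\Omega(\gamma)$ with constants depending only on $g$. This requires confirming that the smallest positive eigenvalue of $\sum_m(I-P_m)$ dominates $\gamma$. I would argue this via the operator inequality $I-P_m\succeq H_m$ together with the equality of kernels. I would also check that the bound $1/\sqrt{\gamma/g^2+1}$ translates into $\gamma_*\ge c\,\gamma/g^2$ when $\gamma/g^2\le 1$, and into $\gamma_*$ bounded below by a constant otherwise. In either case the bound is $\Omega(\gamma)$ for $g=\Or(1)$.
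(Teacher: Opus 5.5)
Your proposal is correct and follows the paper's proof essentially step for step: the compressed block encoding of $\mathrm{DL}(H)$, the singular-value bound from Corollary~\ref{cor:shrink} applied to $\sum_m(I-P_m)$, Lemma~\ref{lem:projection_operator_in_terms_of_UV}, and QSVT with the rescaled Chebyshev polynomial. Your explicit check that $\sum_m(I-P_m)\succeq H$ with the same kernel (so its gap is at least $\gamma$) fills in a step the paper leaves implicit. The large-$\gamma$ branch of your case split is also harmless, because $\gamma\le g+1$ for such Hamiltonians; the paper's claim $\gamma_*=\Omega(\gamma)$ tacitly uses the same fact.
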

We note that each block encoding of the ground space projection operator of $H_m$ needs to be implemented with gates and has some errors. Suppose each introduces an $\epsilon'$ error in terms of the spectral norm, then the cumulative error in the block encoding of $P_H$ is $\Or(M\ell \epsilon')$. To ensure that the error in terms of the spectral norm is $\epsilon'$ for each small block encoding involving $\Or(1)$ qubits, the Solovay-Kitaev theorem tells us that $\Or(\log^c(1/\epsilon'))$ gates are needed, where $c=\log_{(1+\sqrt{5})/2}(2)+\delta$ for any $\delta>0$. Therefore
\begin{cor}
    \label{cor:gate_complexity_approximate_proj}
    Let $H = \sum_{m=1}^M H_m$, where each $H_m$ satisfies $0\preceq H_m\preceq I$, is supported on at most $\mathsf{k}$ qubits, and overlaps with at most $g$ other terms $H_{m'}$. We assume that $H$ is frustration-free in the sense that $P_H H_m=0$ for all $m=1,2,\cdots,M$, where $P_H$ is its ground space projection operator. Then a $(1,\Or(\log(M)),\epsilon)$-block encoding of $P_H$ can be obtained using 
    \[
    \Or\left(M\gamma^{-1/2}\log(1/\epsilon)\log^c(M\gamma^{-1/2}\log(1/\epsilon)/\epsilon)\right)
    \]
    elementary gates, where $c$ is the exponent in the Solovay-Kitaev theorem.
\end{cor}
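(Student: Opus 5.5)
The plan is to build the block encoding of Theorem~\ref{thm:block_encode_proj} explicitly from gates, with exact local block encodings replaced by synthesized ones, and to track how the synthesis errors propagate. First I fix the QSVT degree $\ell=\Or(\gamma^{-1/2}\log(1/\epsilon))$ as in \eqref{eq:choice_of_ell}, so that the ideal operator $\tilde P_H=Up(S)V^\dag$ satisfies $\|\tilde P_H-P_H\|\le \epsilon/2$ by \eqref{eq:approximate_proj_op_error_bound}. Each local ground projector $P_m$ acts on at most $\mathsf{k}=\Or(1)$ qubits. Its exact $(1,1,0)$-block encoding is a unitary on $\Or(1)$ qubits, for instance the reflection-based unitary $\begin{pmatrix}P_m & I-P_m\\ I-P_m & P_m\end{pmatrix}$ or any standard dilation. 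By the Solovay--Kitaev theorem it can be approximated to spectral-norm accuracy $\epsilon'$ using $\Or(\log^c(1/\epsilon'))$ elementary gates.

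Next I bound the error. The block encoding of $\mathrm{DL}(H)$ is a product of $M$ such local unitaries, together with the compression gadget controlling the $\Or(\log M)$ ancillas. The QSVT circuit calls this block encoding and its inverse $\ell$ times, interleaved with single-qubit phase rotations. The whole circuit is therefore a product of $\Or(M\ell)$ approximately synthesized unitaries, plus the gadget and phase gates. Replacing each factor by its approximation changes the overall unitary by at most the sum of the individual errors, by the telescoping bound $\|\prod_i A_i-\prod_i B_i\|\le\sum_i\|A_i-B_i\|$ for unitaries. The top-left block therefore moves by at most $\Or(M\ell\epsilon')$. The phase rotations and gadget gates number $\Or(\ell+M\ell)$, and I would also synthesize them to accuracy $\epsilon'$, which does not change the order of the count. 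Choosing $\epsilon'=\Theta(\epsilon/(M\ell))$ keeps the total error at most $\epsilon/2$. Combined with the polynomial approximation error, this gives a $(1,\Or(\log M),\epsilon)$-block encoding of $P_H$.

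Finally I count gates. There are $\Or(M\ell)$ local unitaries, each costing $\Or(\log^c(M\ell/\epsilon))$ gates. Substituting $\ell$ gives
\[
\Or\left(M\gamma^{-1/2}\log(1/\epsilon)\log^c(M\gamma^{-1/2}\log(1/\epsilon)/\epsilon)\right).
\]

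The main obstacle I expect is confirming that the compression gadget does not add more than $\Or(1)$ elementary gates per query, up to logarithmic factors. It uses an $\Or(\log M)$-bit counter incremented once per local block encoding, and each increment costs $\Or(\log M)$ gates. Naively this gives an extra $\log M$ factor, which would have to be absorbed into the $\log^c$ term since $\log M\le\log^c(M\ell/\epsilon)$. Failing that, I would appeal to the gadget's amortized cost as stated in \cite{FangLinTong2023time}.
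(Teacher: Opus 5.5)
Your proposal is correct and follows the paper's argument. It takes $\ell=\Or(\gamma^{-1/2}\log(1/\epsilon))$ from Theorem~\ref{thm:block_encode_proj}, synthesizes each of the $\Or(M\ell)$ local block encodings to accuracy $\epsilon'=\Theta(\epsilon/(M\ell))$ via Solovay--Kitaev, and bounds the accumulated error by $\Or(M\ell\epsilon')$. Your extra bookkeeping for the compression-gadget counter and the QSVT phase gates, which the paper omits, is sound: it adds only $\Or(\log M)$ gates per local step, additively rather than as a multiplicative factor, so it is dominated by the $\log^c$ synthesis cost and no amortization argument is needed.
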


\section{The parent Hamiltonian}
\label{sec:parent_hamiltonian}

The parent Hamiltonian is a coherent representation of the Lindbladian $\mathcal{L}$ which allows for a quadratic speedup in terms of the spectral gap dependence at the expense of doubling the Hilbert space. It is constructed such that its ground state provides access to the Gibbs state. The Hermiticity of $\mathcal{L}$ under the $\sigma$-KMS inner product results in the Hermiticity of this coherent representation, which is therefore viewed as a Hamiltonian. This representation enables us to prepare the Gibbs state by preparing the ground state of the parent Hamiltonian using the techniques in Section~\ref{sec:implementing_ground_state_projection_operators} and obtain a quadratic speedup.

Define the superoperator $\Gamma_\sigma: \calB(\calH) \to \calB(\calH)$ as
\begin{equation}
    \label{eq:gamma_sigma_def}
    \Gamma_\sigma (\cdot) = \sigma^{\frac{1}{2}} \cdot \sigma^{\frac{1}{2}}.
\end{equation}
$\Gamma_\sigma$ recovers the $\sigma$-KMS inner product in the sense that
\begin{equation}
    \braket{X,Y}_{\mathrm{KMS}} = \braket{\Gamma_\sigma^\frac{1}{2} X, \Gamma_\sigma^\frac{1}{2} Y}_{\mathrm{HS}}.
\end{equation}
Define the superoperator $\frakH: \calB(\calH) \to \calB(\calH)$ as 
\begin{equation}
    \frakH = \Gamma_\sigma^\frac{1}{2} \calL \Gamma_\sigma^{-\frac{1}{2}}.
\end{equation}
This superoperator is Hermitian under the usual Hilbert-Schmidt inner product, i.e., 
\begin{equation}
     \braket{\frakH X, Y}_{\mathrm{HS}} = \braket{X, \frakH Y}_{\mathrm{HS}}, \;\;\; \forall X,Y \in \calB(\calH).
\end{equation}
By vectorizing the operator space $\calB(\calH)$, the superoperator $\frakH$ can be explicitly written as a Hamiltonian. Define the linear map $v : \calB(\calH) \to \calH^{\otimes 2}$ that acts on the computational basis as follows:
\begin{equation}
    v (\ket{i} \bra{j}) = \ket{i} \otimes \ket{j}, \;\;\; i,j \in [d].
\end{equation}
By definition and linearity, we have
\begin{equation}
    v(X) = v(\sum_{i,j} X_{ij} \ket{i} \bra{j}) = \sum_{i,j} X_{ij} \ket{i} \otimes \ket{j}, \;\;\; X \in \calB(\calH).
\end{equation}
In particular, for a rank-$1$ operator $X = \ket{\psi} \bra{\phi}$, we have
\begin{equation}
    v(\ket{\psi} \bra{\phi}) = \ket{\psi} \otimes \ket{\phi^*},
\end{equation} where $\ket{\phi^*}$ is the entrywise complex conjugate of $\ket{\phi}$ in the computational basis.
Upon vectorization, the superoperator $\frakH$ is represented as a Hamiltonian $\sfH$ acting on the doubled Hilbert space $\calH^{\otimes 2}$,
\begin{equation}\label{eq:superoperator_vectorization}
    \sfH = v \, \frakH \, v^{-1}.
\end{equation}
Observe that
\begin{equation}
    \frakH(\sqrt{\sigma}) = \Gamma_\sigma^{1/2} \calL (I) = 0.
\end{equation} Therefore $\sqrt{\sigma}$ is the ground\footnote{Strictly speaking, this state is the top eigenstate of the parent Hamiltonian, but they are equivalent up to flipping sign.} state of the superoperator $\frakH$ with eigenvalue $0$. Or equivalently, the vectorized state $v(\sqrt{\sigma})$ is the ground state of the Hamiltonian $\sfH$ with eigenvalue $0$.
Moreover, assuming that the eigenvalue decomposition of $\sigma$ is given by $\sigma = \sum_{i=1}^d \lambda_i \ket{\psi_i} \bra{\psi_i}$, 
\begin{equation}
    v(\sqrt{\sigma}) = \sum_{i=1}^d \sqrt{\lambda_i} \ket{\psi_i} \otimes \ket{\psi_i^*},
\end{equation} which is a purification of the Gibbs state $\sigma$, i.e., tracing over the second subsystem gives $\sigma$ on the first subsystem. Since the map from the Lindbladian to the Hamiltonian is linear, given a Lindbladian that is a sum of $M$ local Lindbladians, $\sfH$ is equal to the sum of $M$ local terms, each of which can be classically computed in time that is upper bounded by some constant depending on locality. The problem of preparing the Gibbs state $\sigma$ now reduces to preparing the ground state of the Hamiltonian $\sfH$.

\section{State preparation by annealing}
\label{sec:annealing}

To prepare the Gibbs state, we utilize a discretized temperature path similar to quantum annealing, and prepare the Gibbs state by projecting the state consecutively to the ground space of the parent Hamiltonian of the next temperature. Denote the purified Gibbs state at inverse temperature $\nu$ by $\ket{\psi_{\nu}}$. Denote by $P_{\nu} := \ket{\psi_\nu}\bra{\psi_\nu}$ the exact projection operator and $\Tilde{P}_{\nu}$ the approximate projection operator. We first prepare the ground state $\ket{\psi_{\beta_0}}$ of the parent Hamiltonian at $\beta_0 = 0$, which is the purified maximally mixed state, or in other words the maximally entangled state on $\calH^{\otimes 2}$.
In the annealing process, we go through $K$ temperature steps,
\begin{equation}
    \beta_0 = 0 \rightarrow \beta_1 \rightarrow \beta_2 \rightarrow \cdots \rightarrow \beta_K = \beta, \;\;\;\; \beta_{j-1} < \beta_{j} \quad \forall j\in[K].
\end{equation} 
Note that for each temperature, the associated parent Hamiltonian $\sfH_{\beta_j}$ is different, while the original Hamiltonian for the actual physical system as in the Gibbs state does not depend on temperature. 
For each temperature, using Corollary~\ref{cor:gate_complexity_approximate_proj}, we construct each approximate projection operator $\Tilde{P}_{\beta_j}$. 
Then, we send the initial state to the final state using $K$ transition operators, each constructed from two consecutive projectors,
\begin{equation}
    \label{eq:transitionops}
    O_{\beta_0}^{\beta_{1}} \rightarrow O_{\beta_1}^{\beta_2} \rightarrow \cdots \rightarrow O_{\beta_{K-1}}^{\beta_{K}}.
\end{equation} 
Denote the ideal transition operator by $O_{\beta_{j-1}}^{\beta_{j}} := \ket{\psi_{\beta_{j}}} \bra{\psi_{\beta_{j-1}}}$.
Error is introduced in the implementation of each $\Tilde{P}_{\beta_j}$ as well as in the implementation of the transition operator. Denote the transition operator with error by $\Tilde{O}_{\beta_{j-1}}^{\beta_{j}}$.

Denote the final state we obtain by $\ket{\psi}$,
\begin{equation}
    \label{eq:finalstate}
    \ket{\psi} = \frac{\prod_{j=1}^K \Tilde{O}_{\beta_{j-1}}^{\beta_{j}} \ket{\psi_{\beta_0}}}{\left\|\prod_{j=1}^K \Tilde{O}_{\beta_{j-1}}^{\beta_{j}} \ket{\psi_{\beta_0}} \right\|}.
\end{equation} 
$\ket{\psi}$ is obtained with success probability
\begin{equation}
    \label{eq:successprob}
    p = \left\|\prod_{j=1}^K \Tilde{O}_{\beta_{j-1}}^{\beta_{j}} \ket{\psi_{\beta_0}}\right\|^2.
\end{equation}
For convenience, denote the unnormalized state by
\begin{equation}
    \ket{\Tilde{\psi}_{\beta_i}} =  \prod_{j=1}^{i} \Tilde{O}_{\beta_{j-1}}^{\beta_{j}} \ket{\psi_{\beta_0}}.
\end{equation}
Let the error of each $\Tilde{O}_{\beta_{j-1}}^{\beta_{j}}$ be upper bounded by $\delta/(2K)$. i.e., $\|\Tilde{O}_{\beta_{j-1}}^{\beta_{j}} - O_{\beta_{j-1}}^{\beta_{j}}\| \le \delta/{2K}$. 
The error of the final state is then bounded by $\delta + \calO(\delta^2)$ due to triangle inequality. More precisely,
\begin{equation}
    \begin{aligned}
    \| \ket{\Tilde{\psi}_\beta} - \ket{\psi_\beta} \| & = \norm{ \Tilde{O}_{\beta_{K-1}}^{\beta_{K}} (\ket{\Tilde{\psi}_{\beta_{K-1}}} - \ket{\psi_{\beta_{K-1}}}) + (\Tilde{O}_{\beta_{K-1}}^{\beta_{K}} - {O}_{\beta_{K-1}}^{\beta_{K}}) \ket{\psi_{\beta_{K-1}}} } \\
    & \le \norm{\ket{\Tilde{\psi}_{\beta_{K-1}}} - \ket{\psi_{\beta_{K-1}}}} + \norm{(\Tilde{O}_{\beta_{K-1}}^{\beta_{K}} - {O}_{\beta_{K-1}}^{\beta_{K}}) \ket{\psi_{\beta_{K-1}}}} \\ 
    & \le \cdots \le \sum_{j=1}^{K} \| \Tilde{O}_{\beta_{j-1}}^{\beta_{j}} - {O}_{\beta_{j-1}}^{\beta_{j}} \| \le \frac{\delta}{2},
    \end{aligned}
\end{equation} where from the first to the second line we additionally used the fact that $\|\Tilde{O}_{\beta_{j-1}}^{\beta_{j}}\|\le 1$ since $\Tilde{O}_{\beta_{j-1}}^{\beta_{j}}$ is encoded in a unitary. It immediately follows that $\| \ket{\Tilde{\psi}_\beta} \| \ge 1 - \frac{\delta}{2}$. 
The error of the final state
\begin{equation}
    \begin{aligned}
        \| \ket{\psi} - \ket{\psi_\beta}\| & = \left\| \frac{   \ket{\Tilde{\psi}_\beta}  }{  \|\ket{\Tilde{\psi}_\beta}\|  } - \ket{\psi_\beta}  \right\| \\ 
        & = \left\| \frac{ \ket{\Tilde{\psi}_\beta} - \|\ket{\Tilde{\psi}_\beta}\| \ket{\psi_{\beta}}  }{  \|\ket{\Tilde{\psi}_\beta}\|  } \right\| \\ 
        & = \frac{ \left\| \ket{\psi_\beta} ( 1 - \| \ket{\Tilde{\psi}_\beta} \|) + ( \ket{\Tilde{\psi}_\beta} - \ket{\psi_\beta} )\right\| }{  \|\ket{\Tilde{\psi}_\beta}\|  } \\ 
        & \le \delta + \calO(\delta^2).
    \end{aligned}
\end{equation}
On the other hand, the success probability
\begin{equation}
    p = \| \ket{\Tilde{\psi}_\beta} \|^2  \ge (1-\frac{\delta}{2})^2 \ge 1 - \delta.
\end{equation}
We summarize the above in the following lemma.
\begin{lem}
    \label{lem:finalerror}
    Let $\{ \Tilde{O}_{\beta_{j-1}}^{\beta_{j}} \}_{j=1}^K$ be a set of transition operators as in Eq.~\eqref{eq:transitionops}, the final state and the success probability defined in Eq.~\eqref{eq:finalstate} and Eq.~\eqref{eq:successprob}. Then, if for each transition operator $\|\Tilde{O}_{\beta_{j-1}}^{\beta_{j}} - \ket{\psi_{\beta_{j}}} \bra{\psi_{\beta_{j-1}}}\| \le \delta/(2K)$, the final state $\|\ket{\psi} - \ket{\psi_\beta}\| \le \delta + \calO(\delta^2)$ and the success probability $p \ge 1 - \delta$.
\end{lem}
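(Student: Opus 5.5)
The plan is to follow the telescoping argument sketched just before the statement and make each step explicit. Write $O_j := \ket{\psi_{\beta_j}}\bra{\psi_{\beta_{j-1}}}$ and $\Tilde{O}_j := \Tilde{O}_{\beta_{j-1}}^{\beta_j}$. The ideal chain satisfies $O_j\ket{\psi_{\beta_{j-1}}}=\ket{\psi_{\beta_j}}$ exactly, since the purified Gibbs states are normalized. The ideal unnormalized state after $i$ steps is therefore exactly $\ket{\psi_{\beta_i}}$.

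First I prove by induction on $i$ that $e_i := \|\ket{\Tilde{\psi}_{\beta_i}}-\ket{\psi_{\beta_i}}\|\le \sum_{j\le i}\|\Tilde{O}_j-O_j\|$. For the inductive step I use the identity
\[
\ket{\Tilde{\psi}_{\beta_i}}-\ket{\psi_{\beta_i}}=\Tilde{O}_i\bigl(\ket{\Tilde{\psi}_{\beta_{i-1}}}-\ket{\psi_{\beta_{i-1}}}\bigr)+(\Tilde{O}_i-O_i)\ket{\psi_{\beta_{i-1}}}.
\]
Two facts close the step. One is $\|\Tilde{O}_i\|\le 1$, because $\Tilde{O}_i$ is a block of a unitary (a block-encoding with normalization $1$). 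The other is $\|\ket{\psi_{\beta_{i-1}}}\|=1$. Together they give $e_i\le e_{i-1}+\delta/(2K)$, and hence $e_K\le \delta/2$.

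Next I convert this to a bound on the norm. By the reverse triangle inequality, $\bigl|\|\ket{\Tilde\psi_\beta}\|-1\bigr|\le \delta/2$, so $\|\ket{\Tilde\psi_\beta}\|\ge 1-\delta/2$. This immediately gives $p=\|\ket{\Tilde\psi_\beta}\|^2\ge (1-\delta/2)^2\ge 1-\delta$.

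For the normalized state I write
\[
\ket{\psi}-\ket{\psi_\beta}=\frac{(\ket{\Tilde\psi_\beta}-\ket{\psi_\beta})+(1-\|\ket{\Tilde\psi_\beta}\|)\ket{\psi_\beta}}{\|\ket{\Tilde\psi_\beta}\|}.
\]
The numerator has norm at most $\delta/2+\delta/2=\delta$, and the denominator is at least $1-\delta/2$. This yields $\|\ket{\psi}-\ket{\psi_\beta}\|\le \delta/(1-\delta/2)=\delta+\calO(\delta^2)$ for, say, $\delta\le 1$.

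There is no genuine obstacle here. The only point needing care is justifying $\|\Tilde{O}_j\|\le1$, which must hold for the implemented transition operator and not merely for the ideal one. I will take this from the fact that each $\Tilde{O}_j$ is realized as the top-left block of a unitary, so its spectral norm is at most one. If instead only $\|\Tilde{O}_j\|\le 1+\delta/(2K)$ were available, the same induction would still go through with an extra factor $(1+\delta/(2K))^K\le e^{\delta/2}$. That factor is absorbed into the $\calO(\delta^2)$ term.
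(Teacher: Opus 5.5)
Your proposal is correct and follows the paper's argument essentially step for step. You use the same one-step telescoping identity, closed with $\|\Tilde{O}_{\beta_{j-1}}^{\beta_j}\|\le 1$ from the unitary block encoding and $\|\ket{\psi_{\beta_{j-1}}}\|=1$, to get $\|\ket{\Tilde{\psi}_\beta}-\ket{\psi_\beta}\|\le\delta/2$; then $\|\ket{\Tilde{\psi}_\beta}\|\ge 1-\delta/2$, $p\ge(1-\delta/2)^2\ge 1-\delta$, and the same decomposition of the normalized-state error, where your explicit bound $\delta/(1-\delta/2)$ is just a sharper form of the paper's $\delta+\calO(\delta^2)$.
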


Hence it suffices to construct $ \Tilde{O}_{\beta_{j-1}}^{\beta_{j}} $ to precision $\delta/(2K)$. We use QSVT to construct  $ \Tilde{O}_{\beta_{j-1}}^{\beta_{j}} $ from $\Tilde{P}_{\beta_j} \Tilde{P}_{\beta_{j-1}} \approx \bra{\psi_{\beta_{j}}} \braket{ \psi_{\beta_j} | \psi_{\beta_{j-1}}  } \ket{\psi_{\beta_{j-1}}}$, by using it to boost
$|\braket{ \psi_{\beta_j} | \psi_{\beta_{j-1}}}|$ close to $1$.
Query complexity depends primarily on the overlap $|\braket{ \psi_{\beta_j} | \psi_{\beta_{j-1}}}|$, which is to be controlled by choosing an appropriate annealing path. The following lemma gives the query complexity of implementing  $ \Tilde{O}_{\beta_{j-1}}^{\beta_{j}} $ with $\Tilde{P}_{\beta_j}$, $\Tilde{P}_{\beta_{j-1}}$.
\begin{lem}
    \label{lem:transop}
    Given access to approximate projectors $\tilde{P}_{\beta_j}$ and $\tilde{P}_{\beta_{j+1}}$ where $\|\tilde{P}_{\beta_j} - \ket{\psi_{\beta_j}} \bra{\psi_{\beta_j}}\| \le \mu$ and $\|\tilde{P}_{\beta_{j+1}} - \ket{\psi_{\beta_{j+1}}} \bra{\psi_{\beta_{j+1}}}\| \le \mu$.
    Assume there exists a constant $b$ that lower bounds the overlap between these two states, $| \langle \psi_{\beta_{j}} | \psi_{\beta_{j+1}} \rangle | \ge b$. 
    Then there exists a QSVT algorithm $p^{\text{SV}}$ using a single ancilla qubit and $l= \calO (b^{-1} \log (\epsilon^{-1}))$ queries to $\tilde{P}_{\beta_j}$ and $\tilde{P}_{\beta_{j+1}}$ that outputs $\|p^{\text{SV}}(\tilde{P}_{\beta_{j+1}}\tilde{P}_{\beta_j}) - \ket{\psi_{\beta_{j+1}}} \bra{\psi_{\beta_{j}}}\| \le 4l\sqrt{3\mu} + \epsilon$.
\end{lem}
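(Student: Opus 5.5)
We first work with the ideal projectors $P_j=\ket{\psi_{\beta_j}}\bra{\psi_{\beta_j}}$ and $P_{j+1}=\ket{\psi_{\beta_{j+1}}}\bra{\psi_{\beta_{j+1}}}$, and then transfer the result to the approximate projectors. The product $A:=P_{j+1}P_j=c\,\ket{\psi_{\beta_{j+1}}}\bra{\psi_{\beta_j}}$, with $c=\braket{\psi_{\beta_{j+1}}|\psi_{\beta_j}}$, is rank one. Its singular value decomposition has the single nonzero singular value $|c|\ge b$, left singular vector $\ket{\psi_{\beta_{j+1}}}$, and right singular vector $e^{-i\arg c}\ket{\psi_{\beta_j}}$.

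We apply QSVT with an odd polynomial $p$ of degree $l=\Or(b^{-1}\log(1/\epsilon))$ satisfying
\[
|p(x)|\le 1 \text{ on } [-1,1],\qquad |p(x)-1|\le\epsilon \text{ on } [b,1],\qquad p(0)=0.
\]
This polynomial comes from the standard polynomial approximation of the sign function (Gilyén et al.) or an amplitude-amplification polynomial. Being odd, $p$ maps singular value $0$ to $0$, so
\[
p^{\mathrm{SV}}(A)=p(|c|)\,\ket{\psi_{\beta_{j+1}}}\bra{\psi_{\beta_j}}e^{i\arg c}.
\]
This is $\epsilon$-close to $\ket{\psi_{\beta_{j+1}}}\bra{\psi_{\beta_j}}$ up to a global phase. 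We remove that phase either by fixing the phase convention of the purified states, for which $c>0$ since $v(\sqrt\sigma)$ has real positive overlaps, or by absorbing it into the definition of the target. Because $\tilde P_{\beta_j}$ and $\tilde P_{\beta_{j+1}}$ are each accessed through block encodings, the block encoding of their product is obtained by composition. A single extra ancilla qubit suffices for the QSVT phase rotations.

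The main step, and the main obstacle, is robustness: controlling $\|p^{\mathrm{SV}}(\tilde A)-p^{\mathrm{SV}}(A)\|$ with $\tilde A=\tilde P_{\beta_{j+1}}\tilde P_{\beta_j}$. A naive bound from the perturbation of singular values fails, because QSVT of a polynomial of degree $l$ is not Lipschitz in operator norm with a constant independent of dimension. Instead we use the robustness lemma of Gilyén et al. for QSVT with respect to perturbations of the block-encoded unitary. That lemma states that if $\|U-\tilde U\|\le\eta$, then the QSVT circuits differ by at most $l\eta$, since the circuit is a product of $l$ uses of the block encoding interleaved with exact phase rotations. The remaining task is to convert the spectral-norm closeness of block-encoded operators, $\|\tilde P-P\|\le\mu$, into closeness of the underlying unitaries. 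For this we invoke the lemma of Gilyén et al.: if two block-encoded operators are $\mu$-close and the encodings are suitably chosen (we re-encode through a standard dilation), then the unitaries can be made $\Or(\sqrt{\mu})$-close. Applied to the product, whose operator error is at most $2\mu+\mu^2\le3\mu$, this gives a unitary distance of at most $2\sqrt{3\mu}$. Allowing a factor $2$ for the Hermitian dilation of the product, or for using both $U$ and $U^\dag$ in each QSVT step, gives
\[
\|p^{\mathrm{SV}}(\tilde A)-p^{\mathrm{SV}}(A)\|\le 4l\sqrt{3\mu}.
\]
The triangle inequality then yields the stated total error $4l\sqrt{3\mu}+\epsilon$.
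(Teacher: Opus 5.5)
Your proposal is correct and follows essentially the same route as the paper. It applies a fixed-parity (fixed-point amplitude amplification) polynomial of degree $\Or(b^{-1}\log(1/\epsilon))$ to the rank-one product $P_{\beta_{j+1}}P_{\beta_j}$, uses the estimate $\|\tilde P_{\beta_{j+1}}\tilde P_{\beta_j}-P_{\beta_{j+1}}P_{\beta_j}\|\le 2\mu+\mu^2\le 3\mu$, invokes the robustness bound $\|p^{\mathrm{SV}}(A)-p^{\mathrm{SV}}(\tilde A)\|\le 4l\sqrt{\|A-\tilde A\|}$ for $\|A\|,\|\tilde A\|\le 1$ (Lemma~22 of Gily\'en et al., whose proof is the dilation argument you sketch), and finishes with the triangle inequality. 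You could cite that lemma directly rather than justifying the constant $4$ through an extra ``factor $2$'', which is unnecessary slack since each of the $l$ uses of $U$ or $U^\dag$ contributes the unitary distance only once. Your remark that $\braket{\psi_{\beta_{j+1}}|\psi_{\beta_j}}=\Tr[\sqrt{\sigma_{\beta_{j+1}}}\sqrt{\sigma_{\beta_j}}]>0$ is useful: it settles a global-phase issue that the paper leaves implicit.
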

\begin{proof}

    We follow the notation of \cite{Gilyen_2019singularvalue}, writing $\Pi' := P_{\beta_j} = \ket{\psi_{\beta_{j+1}}}\bra{\psi_{\beta_{j+1}}}$ and $\Pi := P_{\beta_{j-1}} = \ket{\psi_{\beta_{j}}}\bra{\psi_{\beta_{j}}}$.
    Let $U$ be any unitary such that $\Pi' U \Pi = \ket{\psi_{\beta_{j+1}}} \braket{\psi_{\beta_{j+1}} | \psi_{\beta_j}} \bra{\psi_{\beta_j}}$. Such a unitary can be implemented with two queries of the block-encodings of each $\Pi'$ and $\Pi$. Following QSVT (or fixed-point amplitude amplification) \cite[Theorem 26, 27]{Gilyen_2019singularvalue}, there exists polynomial $p$ of degree $l = \mathcal{O} ( \frac{\log\frac{1}{\epsilon}}{b})$ with fixed parity such that $\| \Pi' p(U) \Pi -  \ket{\psi_{\beta_{j+1}}}\bra{\psi_{\beta_j}}\| \le \epsilon$. The algorithm $p^{\text{SV}}( \Pi' U \Pi ) = \Pi' p(U) \Pi$ can be implemented with $\mathcal{O}(l)$ queries to $U$,$U^\dag$,$\Pi'$,$\Pi$.
    
    Given approximate projection operators $\Tilde{P}_{\beta_{j+1}}$ and $\Tilde{P}_{\beta_{j}}$, the robustness of QSVT \cite[Lemma 22]{Gilyen_2019singularvalue} shows that for matrices $A$ and $\Tilde{A}$ of operator norm at most $1$, it holds that $\|p^{\text{SV}}(A) - p^{\text{SV}}(\Tilde{A})\| \le 4l \sqrt{\|A - \Tilde{A}\|}$. Since 
    \begin{equation}
        \begin{aligned}
        & \| \tilde{P}_{\beta_{j+1}}\tilde{P}_{\beta_j} - P_{\beta_{j+1}} P_{\beta_j} \| \\
        = \, & \| (P_{\beta_{j+1}} + \delta P_{\beta_{j+1}}) (P_{\beta_j} + \delta P_{\beta_j}) - P_{\beta_{j+1}} P_{\beta_j}\| \\
        = \, & \| P_{\beta_{j+1}} \delta P_{\beta_j} + \delta P_{\beta_{j+1}} P_{\beta_j} + \delta P_{\beta_{j+1}} \delta P_{\beta_j} \| \\
        \le \,& 2\mu + \mu^2 \le 3\mu,
        \end{aligned}
    \end{equation} where we used the shorthand $\delta P_{\beta_j} = \Tilde{P_{\beta_j}} - P_{\beta_j}$ and the last line uses $\|\delta P_{\beta_j}\| \le \mu$ and $\|P_{\beta_j}\| = 1$, we have $\|p^{\text{SV}}(\tilde{P}_{\beta_{j+1}}\tilde{P}_{\beta_j}) - p^{\text{SV}}({P}_{\beta_{j+1}}{P}_{\beta_j})\| \le 4l \sqrt{3\mu}$. Finally, by triangle inequality, $\|p^{\text{SV}}(\tilde{P}_{\beta_{j+1}}\tilde{P}_{\beta_j}) - \ket{\psi_{\beta_{j+1}}} \bra{\psi_{\beta_{j}}}\| \le 4l\sqrt{3\mu} + \epsilon$.
\end{proof}

By choosing $\epsilon \le \delta/(4K)$ and $\sqrt{\mu} \le \delta/(16\sqrt{3} l K)$, we have $\|p^{\text{SV}}(\tilde{P}_{\beta_{j+1}}\tilde{P}_{\beta_j}) - \ket{\psi_{j+1}} \bra{\psi_j}\| \le \delta/(2K)$. Also, when $\epsilon \le \delta/(4K)$, $l = \calO(b^{-1} \log(K/\delta))$. Substituting into $\mu$, we have $\mu = \calO\left( \left( \frac{b}{\frac{K}{\delta} \log \frac{K }{\delta}} \right)^2 \right)$.
To control $b$, we adopt the following proposition from \cite{chen2023quantumthermalstatepreparation} which gives a lower bound on the overlap between two consecutive Gibbs states.
\begin{lem}[Proposition G.2 of \cite{chen2023quantumthermalstatepreparation}]
    \label{lem:overlap}
    Let $\ket{\psi_{\beta}}$ and $\ket{\psi_{\beta+\delta\beta}}$ be the ground states of the parent Hamiltonians $\sfH_\beta$ and $\sfH_{\beta+\delta\beta}$ respectively, or in other words, the purified Gibbs states of Hamiltonian $H$ at temperature $\beta$ and $\beta + \delta\beta$. Then
    \begin{equation}
        |\braket{\psi_{\beta+\delta\beta} | \psi_{\beta}}|^2 = 1 - \calO( {\delta\beta}^2 \norm{H}^2  ).
    \end{equation}
\end{lem}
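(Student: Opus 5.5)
The overlap is an explicit quantity, so I will compute it directly rather than go through the parent Hamiltonians. By Section~\ref{sec:parent_hamiltonian}, $\ket{\psi_\nu}=v(\sqrt{\sigma_\nu})$ with $\sigma_\nu=e^{-\nu H}/Z_\nu$ and $Z_\nu=\Tr e^{-\nu H}$. The map $v$ is an isometry from the Hilbert--Schmidt inner product to the standard one, so
\[
\braket{\psi_{\beta+\delta\beta}|\psi_\beta}=\Tr\bigl[\sqrt{\sigma_{\beta+\delta\beta}}\sqrt{\sigma_\beta}\bigr]=\frac{\Tr e^{-(\beta+\delta\beta/2)H}}{\sqrt{Z_{\beta+\delta\beta}Z_\beta}}=\frac{Z_{\beta+\delta\beta/2}}{\sqrt{Z_{\beta+\delta\beta}Z_{\beta}}}.
\]
The middle step uses that $\sqrt{\sigma_\nu}$ are all functions of the same $H$ and hence commute. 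The overlap is therefore real and positive, and its square is $Z_{\beta+\delta\beta/2}^2/(Z_{\beta+\delta\beta}Z_\beta)$.

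Next I pass to logarithms. Define $f(\nu)=\log Z_\nu$. Then
\[
\log|\braket{\psi_{\beta+\delta\beta}|\psi_\beta}|^2=2f(\beta+\tfrac{\delta\beta}{2})-f(\beta+\delta\beta)-f(\beta).
\]
This is minus a second-order symmetric difference of $f$ with step $\delta\beta/2$. Writing $h=\delta\beta/2$ and $c=\beta+h$, Taylor's theorem with integral remainder gives
\[
f(c+h)+f(c-h)-2f(c)=\int_0^{h}(h-t)\bigl[f''(c+t)+f''(c-t)\bigr]\,\dd t,
\]
so the quantity above is $-\Or(h^2\sup|f''|)$. Here $f''(\nu)=\mathrm{Var}_{\sigma_\nu}(H)=\Tr[\sigma_\nu H^2]-\Tr[\sigma_\nu H]^2$, and this lies between $0$ and $\norm{H}^2$. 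Hence $\log|\braket{\cdot|\cdot}|^2\in[-\delta\beta^2\norm{H}^2/4,\,0]$.

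Exponentiating and using $e^{-x}\ge 1-x$ gives $1-\Or(\delta\beta^2\norm{H}^2)\le|\braket{\psi_{\beta+\delta\beta}|\psi_\beta}|^2\le 1$, which is the claim.

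The only point needing care is identifying $\ket{\psi_\nu}$ with $v(\sqrt{\sigma_\nu})$ exactly, including its phase. This requires the ground space of $\sfH_\nu$ to be nondegenerate, which follows from irreducibility of $\calL_\nu$. The bound then holds for the modulus of the overlap regardless of any phase convention, and nothing else is an obstacle.
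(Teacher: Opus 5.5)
Your proof is correct and complete. The paper gives no argument of its own for this lemma; it imports the bound directly from Proposition~G.2 of \cite{chen2023quantumthermalstatepreparation}. Your computation is therefore a self-contained substitute rather than a reconstruction of the paper's reasoning.

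Your route has two steps. First, the closed form $\braket{\psi_{\beta+\delta\beta}|\psi_\beta}=Z_{\beta+\delta\beta/2}/\sqrt{Z_{\beta+\delta\beta}Z_\beta}$, which uses only that the $\sqrt{\sigma_\nu}$ are commuting functions of $H$. Second, a second symmetric difference of $\log Z_\nu$, controlled by $\partial_\nu^2\log Z_\nu=\mathrm{Var}_{\sigma_\nu}(H)\in[0,\norm{H}^2]$. This gives more than the big-$\calO$ statement records:
\begin{itemize}
\item an explicit bound $|\braket{\psi_{\beta+\delta\beta}|\psi_\beta}|^2\ge e^{-\delta\beta^2\norm{H}^2/4}$;
\item independence from any property of the Lindbladians or parent Hamiltonians, beyond identifying the unique top eigenvector with $v(\sqrt{\sigma_\nu})$, which you correctly attribute to irreducibility;
\item a sharper, shift-invariant version with $\norm{H}^2$ replaced by $\sup_{\nu\in[\beta,\beta+\delta\beta]}\mathrm{Var}_{\sigma_\nu}(H)$.
\end{itemize}

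In the annealing analysis, take $K=\alpha\beta\norm{H}$ uniformly spaced steps. The exponential form then gives $b\ge e^{-1/(8\alpha^2)}$, so any constant $\alpha>0$ (not only $\alpha>1$) yields a constant overlap. The variance form would even let $K$ scale with $\beta\sup_\nu\mathrm{Var}_{\sigma_\nu}(H)^{1/2}$ rather than $\beta\norm{H}$.
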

By uniformly spacing the temperatures choosing $K = \Theta(\beta \norm{H})$, we can make $b$ a constant. More precisely, choosing $K = \alpha \beta \norm{H}$ for some constant $\alpha > 1$ gives $b^2 = 1 - \calO(1/\alpha^2)$, which is a $\Theta(1)$ constant. 

In total, we need
\begin{equation}
    \calO \left(  \frac{M}{\sqrt{\gamma}} \log \frac{1}{\mu} \log^{c} \left(   \frac{M}{\sqrt{\gamma} \mu} \log \frac{1}{\mu}   \right)  \right) \cdot
    \calO \left( \frac{1}{b} \log \frac{K}{\delta} \right) \cdot K
\end{equation} uses of elementary gates, where $c$ is the exponent in the Solovay-Kitaev theorem, and $\gamma = \min_j \text{gap}(\sfH_{\beta_j})$. The first term comes from Corollary~\ref{cor:gate_complexity_approximate_proj}, which is the cost of constructing each approximate projector $\tilde{P}_{\beta_j}$ with error $\mu$. The second term comes from Lemma~\ref{lem:transop}, which is the cost of constructing each transition operator $O_{\beta_{j-1}}^{\beta_j}$ with error $\delta/(2K)$ given access to $\tilde{P}_{\beta_j}$ and $\tilde{P}_{\beta_{j-1}}$. The last term $K$ is the number of steps.
Substituting corresponding values, we have the following theorem.
\begin{thm}
    \label{thm:gibbsstateprepcomplex}
    For a given Hamiltonian $H = \sum_{m=1}^M H_m$, we assume: 
    \begin{enumerate}
        \item There exists Lindbladians $\mathcal{L}_\nu$, for $\nu\in[0,\beta]$, where each $\mathcal{L}_\nu$ is irreducible and satisfies the $\sigma_\nu$-detailed balance condition. Here $\sigma_\nu\propto e^{-\nu H}$ is the Gibbs state of $H$ at inverse temperature $\nu$.
        \item The parent Hamiltonian $\mathsf{H}_\nu$ of each $\mathcal{L}_\nu$ is frustration-free and bounded-degree local (Definition~\ref{defn:bounded_degree_local}).
    \end{enumerate}
    Then there exists a quantum algorithm that prepares the purified Gibbs state at inverse temperature $\beta$ with error $\delta$ and success probability at least $1-\delta$ using
    \begin{equation}
        \label{eq:gibbsstateprepruntime}
        \calO \left( \frac{M\beta \norm{H}}{\sqrt{\gamma}} \log^2 \left(\frac{\beta\norm{H}}{\delta}\right) \log^c \left( \frac{M}{\sqrt{\gamma}} \frac{\beta\norm{H}}{\delta} 
        \right) \right)
    \end{equation} elementary gates, and 
    \begin{equation}
        \label{eq:gibbsstateprepancillacnt}
        \calO (\log M) + 1
    \end{equation} resettable ancilla qubits,
    where $c$ is the exponent in the Solovay-Kitaev theorem, and $\gamma = \min_j \mathrm{gap}(\mathcal{L}_{\beta_j})$, $\beta_j = j\beta/K$, $K=\Theta(\beta\|H\|)$.
\end{thm}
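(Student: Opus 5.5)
The plan is to assemble the pieces already built: the annealing error analysis (Lemma~\ref{lem:finalerror}), the transition-operator construction (Lemma~\ref{lem:transop}), the overlap bound (Lemma~\ref{lem:overlap}) and the projector cost (Corollary~\ref{cor:gate_complexity_approximate_proj}). The work is to fix the parameters $K,b,\mu,\epsilon,l$ and simplify the resulting product of costs. We take the uniform schedule $\beta_j=j\beta/K$ with $K=\alpha\beta\|H\|$ for a sufficiently large constant $\alpha$. Lemma~\ref{lem:overlap} then gives $|\braket{\psi_{\beta_{j+1}}|\psi_{\beta_j}}|^2=1-\Or(1/\alpha^2)$, so the overlap bound $b$ is a $\Theta(1)$ constant. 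The starting state $\ket{\psi_{\beta_0}}$ is the maximally entangled state, which can be prepared with $\Or(n)$ Hadamard and CNOT gates, a lower-order cost.

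\textbf{Transition operators.} By Lemma~\ref{lem:finalerror}, it suffices to implement each $\tilde O_{\beta_{j-1}}^{\beta_j}$ to accuracy $\delta/(2K)$ (more precisely $\delta/(4K)$, absorbing the $\Or(\delta^2)$ term by rescaling $\delta$). Apply Lemma~\ref{lem:transop} with $\epsilon=\delta/(4K)$. This gives $l=\Or(\log(K/\delta))$ queries, and requires $\sqrt\mu\le\delta/(16\sqrt3\,lK)$, so $\log(1/\mu)=\Or(\log(K/\delta))$. Each $\tilde P_{\beta_j}$ is obtained from Corollary~\ref{cor:gate_complexity_approximate_proj} applied to the parent Hamiltonian $\sfH_{\beta_j}$. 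This requires frustration-freeness and bounded-degree locality, which is assumption 2.

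The local terms must first be normalized so that $0\preceq \sfH_m\preceq I$ with the same kernel. Then the gap of the DL Hamiltonian built from the local projectors is at least the gap of $-\mathcal{L}_{\beta_j}$, by the operator inequality argument $I-P_m\succeq -\mathcal{L}_m$ of Section~\ref{sec:detectability_lemma}. Hence $\gamma$ may be taken to be $\min_j\mathrm{gap}(\mathcal{L}_{\beta_j})$. Irreducibility (assumption 1) guarantees that the ground state is the unique state $v(\sqrt{\sigma_{\beta_j}})$, so the rank-one projector is the correct target.

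\textbf{Counting.} One projector costs
\[
\Or\big(M\gamma^{-1/2}\log(1/\mu)\log^c(M\gamma^{-1/2}\log(1/\mu)/\mu)\big)=\Or\big(M\gamma^{-1/2}\log(K/\delta)\log^c(MK/(\sqrt\gamma\delta))\big),
\]
where the $\log\log$ factors inside the $\log^c$ are absorbed. Multiplying by the $\Or(b^{-1}\log(K/\delta))=\Or(\log(K/\delta))$ queries per transition and by the $K=\Theta(\beta\|H\|)$ steps gives \eqref{eq:gibbsstateprepruntime}.

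\textbf{Ancillas.} The ancilla count is $\Or(\log M)$ from the compressed block encoding of $\mathrm{DL}(\sfH_{\beta_j})$, plus one qubit for the QSVT in Lemma~\ref{lem:transop}. These are reset between the successive transition steps, conditioned on success flags.

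\textbf{Main obstacle.} The main subtlety is not the arithmetic but justifying that the parent Hamiltonian's local terms meet the hypotheses of Corollary~\ref{cor:gate_complexity_approximate_proj}. This covers normalization, frustration-freeness, and relating the gap of the DL construction to $\mathrm{gap}(\mathcal{L}_{\beta_j})$ rather than to the gap of $\sfH_{\beta_j}$ after rescaling. I would settle it by working with $\sfH_{\beta_j}$ directly, noting that the unitary conjugation $v\Gamma_\sigma^{1/2}$ preserves spectra. A second point to check carefully is that the robustness bound $4l\sqrt{3\mu}$ forces only $\log(1/\mu)=\Or(\log(K/\delta))$, so that the squared logarithm, rather than a polynomial overhead, is the final dependence.
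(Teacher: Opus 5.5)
Your proposal is correct and follows the paper's own argument essentially step for step. The shared steps are the uniform schedule $K=\alpha\beta\|H\|$ making $b=\Theta(1)$ via Lemma~\ref{lem:overlap}, per-step accuracy $\Or(\delta/K)$ from Lemma~\ref{lem:finalerror}, the choices $\epsilon=\delta/(4K)$ and $\sqrt{\mu}\le\delta/(16\sqrt{3}\,lK)$ in Lemma~\ref{lem:transop} giving $\log(1/\mu)=\Or(\log(K/\delta))$, and multiplying the cost in Corollary~\ref{cor:gate_complexity_approximate_proj} by the $\Or(\log(K/\delta))$ queries per transition and the $K$ steps. Your extra remarks (sign flip and normalization of the local terms of $\sfH_{\beta_j}$, and identifying $\mathrm{gap}(\sfH_{\beta_j})$ with $\mathrm{gap}(\mathcal{L}_{\beta_j})$, which the paper does silently) are sound; the one imprecision is calling $v\Gamma_\sigma^{1/2}$ a unitary conjugation, since $\Gamma_\sigma^{1/2}$ is only a similarity transform (unitary as a map from the $\sigma$-KMS-weighted space onto the Hilbert--Schmidt space), but that is all you need for the spectra to agree.
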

The $\calO (\log M)$ ancilla qubits above are used to coherently track the success of each projection operator involved in the DL operator (see the discussion under Definition~\ref{defn:block_encoding} of block-encoding).

\section{Application to commuting bounded-degree local Hamiltonians}
\label{sec:commuting_ham}

In this section,
we consider the KMS-DBC Lindbladian of the form in \cite{Ding2025EfficientQuantumGibbs}, which encompasses the Davies generator \cite{Davies74,Davies76} and the algorithmic Lindbladian studied in \cite{ChenKastoryanoGilyen2023efficient}. The construction in \cite{Ding2025EfficientQuantumGibbs} is not, in general, always irreducible. Nevertheless, we shall assume irreducibility, as this is the case for the Davies generator and the algorithmic Lindbladian in \cite{ChenKastoryanoGilyen2023efficient}.  We will show that for commuting bounded-degree local Hamiltonians, the parent Hamiltonian is bounded-degree local and frustration free, therefore our method applies. 

To begin with, we review the Lindbladian form in \cite{Ding2025EfficientQuantumGibbs}.
A jump operator $L^a$ is assigned to each coupling operator $A^a$, defined as 
\begin{equation}
    L^a = \intinf f^a(t) A^a(t) \dd t , \quad A^a(t) = e^{iHt} A^a e^{-iHt},
\end{equation} with a weighting function $f^a: \bbR \to \bbC$ given by 
\begin{equation}
    f^a(t) = \frac{1}{2\pi } \intinf q^a(\omega) e^{-\beta \omega} e^{-it\omega} \dd \omega.
\end{equation} 
The function $q^a(\omega): \bbR \to \bbC$ satisfies $q^a(\omega) = \overline{q^a(-\omega)}$, and $\beta$ is the inverse temperature. 
$A^a(t)$ is the time evolution of $A^a$ under $H$. Note that the weighting function and hence the jump operators are in general temperature-dependent. For simplicity, we omit the subscript $\beta$ representing the temperature for $f^a$ and $L^a$ for now. 
In addition to the jump operators, there is a coherent term
\begin{equation}
    \label{eq:coherent_term_expression}
    G = \sum_{a\in \calA } \intinf g(t) \: e^{iHt} \, ((L^a)^\dag L^a) \, e^{-iHt} \dd t,
\end{equation} with $g(t):\bbR\to \bbC$ defined as
\begin{equation}
    g(t) = \frac{1}{2\pi } \intinf -\frac{i}{2} \tanh \left(-\frac{ \omega}{4} \right) \kappa(\omega) e^{-it\omega} \dd \omega,
\end{equation}  
where $\kappa(\omega)$ satisfies
\begin{equation}
    \kappa(\omega) = 1 \text{  for  } \omega \in [-2\norm{H}, 2 \norm{H}].
\end{equation}
Function $f^a$ and $g$ are all $L^1$-integrable. Finally, the total Lindbladian in the Heisenberg picture is defined as 
\begin{equation}
    \calL(X) = i[G,X] + \sum_{a\in\calA} \Big ( (L^a)^\dag X L^a - \frac{1}{2} \left\{ (L^a)^\dag L^a, X \right\} \Big).
\end{equation}

In our setting, to obtain the Lindbladian form of Eq.~\eqref{eq:Lindbladian}, let $G = \sum_{a\in\calA} G^a$ in Eq.~\eqref{eq:coherent_term_expression}, 
with 
\begin{equation}
    G^a = \intinf g(t) \: e^{iHt} \, ((L^a)^\dag L^a) \, e^{-iHt} \dd t.
\end{equation}
Letting
\begin{equation}
    \calL^a(X) = i[G^a, X] + (L^a)^\dag X L^a - \frac{1}{2} \left\{ (L^a)^\dag L^a, X \right\},
\end{equation} we obtain 
$\calL = \sum_{a\in\calA} \calL^a$. For a specific temperature $\beta$, to explicitly denote the temperature dependence, we write $\calL_\beta = \sum_{a\in\calA} \calL_\beta^a$.
This Lindbladian is $\sigma_\beta$-detailed balance, as shown in \cite{Ding2025EfficientQuantumGibbs}. We further assume that it is irreducible.

It follows from Eq.~\eqref{eq:superoperator_vectorization} that, for superoperator $K(\mathplaceholder) = A (\mathplaceholder) B^\dagger$, its vectorization $vKv^{-1} = A \otimes B^*$, where $B^*$ is the entrywise complex conjugate. 
Subsequently, our parent Hamiltonian at temperature $\beta$ can be written as 
\begin{equation}
    \sfH_\beta = \sum_{a\in\calA} v \calL_\beta^a v^{-1} = \sum_{a\in\calA} \sfH_{\beta}^a,
\end{equation}
where
\begin{multline}
    \sfH_{\beta}^a = \Big (\sigma_\beta^{\frac{1}{4}} \otimes (\sigma_\beta^\top)^{\frac{1}{4}} \Big) \cdot \\ 
    \Big ( 
        iG_{\beta}^a \otimes I - iI \otimes (G_{\beta}^a)^\top + (L_\beta^a)^\dag \otimes (L_{\beta}^a)^\top - \frac{1}{2} (L_{\beta}^a)^\dag L_{\beta}^a \otimes I - \frac{1}{2} I \otimes (L_{\beta}^a)^\top (L_{\beta}^a)^* \Big) \\
    \cdot\Big( \sigma_\beta^{-\frac{1}{4}} \otimes (\sigma_\beta^\top)^{-\frac{1}{4}} \Big),
\end{multline}
where the first and last term is the vectorization of $\Gamma_\sigma$ defined in Eq.~\eqref{eq:gamma_sigma_def} and the middle term is the vectorization of $\calL_{\beta}^a$. The parent Hamiltonian is frustration-free, with $\ket{\sqrt{\sigma_\beta}}$ being the ground state for both $\sfH_\beta$ and $\sfH_\beta^a$.

Consider a commuting Hamiltonian $H = \sum_{m} H_m$ with locality $l$ and bounded degree $q$ and a set of coupling operators $\{A^a\}$ with locality $k_A$ and bounded degree $g_A$.
The support of $A^a(t)$ is the union of the support of $A^a$ and all Hamiltonian terms that intersect $A^a$. The number of such intersecting Hamiltonian terms is bounded. Then by locality, the support of $A^a(t)$ is bounded. By definition, $L^a_\beta$, $G^a_\beta$, both before and after applying by $\sigma_\beta^{\frac{1}{4}} \mathplaceholder \sigma_\beta^{-\frac{1}{4}} = e^{-\frac{1}{4}\beta H} \mathplaceholder e^{\frac{1}{4}\beta H}$,
have the same support as $A^a(t)$. Therefore, the parent Hamiltonian is local. Furthermore, since $A^a(t)$ has bounded support, the number of Hamiltonian terms that intersect $A^a(t)$ is bounded, therefore the number of $a'$ such that $\suppm(A^{a'}(t)) \cap \suppm( A^a(t) ) \neq \emptyset$ 
is bounded. Thus, the parent Hamiltonian is bounded-degree local.

Consider the annealing path with inverse temperature $\nu \in [0, \beta]$. Each $\calL_\nu$ and $\sfH_\nu$ are constructed as above, thus they meet the requirements for our method to apply. To summarize, we have the following corollary.

\begin{cor}
\label{cor:commuting}
    Consider commuting and bounded-degree local Hamiltonian $H$ (Definition~\ref{defn:bounded_degree_local}). For each $\nu\in [0, \beta]$, we construct $\sigma_\nu$-detailed balanced $\calL_\nu$ as above and assume its irreducibility. Then, the corresponding parent Hamiltonian $\sfH_\nu$ will be bounded-degree local and frustration-free, thus it satisfies the requirements of Theorem~\ref{thm:gibbsstateprepcomplex}. Therefore there exists a quantum algorithm that prepares the purified Gibbs state at inverse temperature $\beta$ with error $\delta$ and success probability at least $1-\delta$ with gate complexity
    \begin{equation}
        \calO \left( \frac{M\beta \norm{H}}{\sqrt{\gamma}} \log^2 \left(\frac{\beta\norm{H}}{\delta}\right) \log^c \left( \frac{M}{\sqrt{\gamma}} \frac{\beta\norm{H}}{\delta} 
        \right) \right)
    \end{equation} using $\calO (\log M) + 1$ resettable ancilla qubits, where $c$ is the exponent in the Solovay-Kitaev theorem, and $\gamma = \min_j \mathrm{gap}(\mathcal{L}_{\beta_j})$, $\beta_j = j\beta/K$, $K=\Theta(\beta\|H\|)$.
\end{cor}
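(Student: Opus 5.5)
The corollary follows from Theorem~\ref{thm:gibbsstateprepcomplex} once its two hypotheses are checked for every $\nu\in[0,\beta]$. Hypothesis 1 (detailed balance and irreducibility of $\calL_\nu$) is partly given: the $\sigma_\nu$-KMS detailed balance of the construction in \cite{Ding2025EfficientQuantumGibbs} is quoted from that work, and irreducibility is assumed. The work is therefore in hypothesis 2. We must show that $\sfH_\nu=\sum_a \sfH_\nu^a$ is (i) frustration-free and (ii) bounded-degree local. The complexity bound and ancilla count are then read off from \eqref{eq:gibbsstateprepruntime}--\eqref{eq:gibbsstateprepancillacnt}.

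\textbf{Frustration-freeness.} Each $\calL_\nu^a$ is itself a Lindbladian in Heisenberg picture, so $\calL_\nu^a(I)=0$. Vectorizing $\Gamma_{\sigma_\nu}^{1/2}\calL^a_\nu\Gamma_{\sigma_\nu}^{-1/2}$ gives $\sfH^a_\nu\, v(\sqrt{\sigma_\nu})=v(\Gamma^{1/2}_{\sigma_\nu}\calL^a_\nu(I))=0$. Hence the common vector $v(\sqrt{\sigma_\nu})$ is annihilated by every term.

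For this to be frustration-free in the sense of \eqref{eq:frustration_free_ham}, I also need each $\sfH^a_\nu$ to be Hermitian and semidefinite (after the sign flip). The plan is to use the termwise $\sigma_\nu$-KMS detailed balance of $\calL^a_\nu$, which the construction of \cite{Ding2025EfficientQuantumGibbs} provides because the coherent term $G^a$ is built from $L^a$ alone. From detailed balance, $-\calL^a_\nu$ is KMS-self-adjoint with nonnegative spectrum, so $-\sfH^a_\nu\succeq 0$.

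Irreducibility of $\calL_\nu$ then makes the common kernel exactly one-dimensional. The kernel of $\sum_a(-\sfH^a_\nu)$ is the intersection of the termwise kernels, so the ground space of the total parent Hamiltonian is annihilated by every term. If needed, rescale the terms so that $0\preceq -\sfH^a_\nu\preceq I$; this changes only constants.

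\textbf{Locality.} For commuting $H$, write $A^a(t)=e^{iH_{N(a)}t}A^ae^{-iH_{N(a)}t}$, where $H_{N(a)}$ is the sum of the Hamiltonian terms whose support meets $\mathrm{supp}(A^a)$. All other terms commute with both $A^a$ and $H_{N(a)}$, so they cancel.

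Since $H$ has locality $l$ and degree $q$, $|N(a)|\le k_A q$. Hence $S_a:=\mathrm{supp}(A^a(t))$ has size at most $k_A+k_Aql=\Or(1)$, uniformly in $t$. Integrating against $f^a$ keeps $L^a_\nu$ supported in $S_a$, and $(L^a_\nu)^\dag L^a_\nu$ is also supported in $S_a$.

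The coherent term $G^a_\nu$ conjugates this product by $e^{iHt}$. By the same cancellation argument, this enlarges the support only to $S_a$ together with the supports of the Hamiltonian terms meeting $S_a$, which is still $\Or(1)$. Call this enlarged set $S'_a$.

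The dressing $\sigma_\nu^{\pm1/4}(\cdot)$ must be handled carefully: $\sigma_\nu^{1/4}X\sigma_\nu^{-1/4}=e^{-\nu H/4}Xe^{\nu H/4}$, and again only the Hamiltonian terms meeting $\mathrm{supp}X$ survive. So each factor of $\sfH^a_\nu$ is supported on $S''_a\times S''_a$ in the doubled system, with $|S''_a|=\Or(1)$.

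For the degree bound, each qubit lies in $\Or(1)$ sets $S''_a$. This follows from the bounded degrees $g_A$ and $q$ via a bounded chain of neighbourhoods: the number of $a'$ with $S''_{a'}\cap S''_a\ne\emptyset$ is bounded by a constant depending only on $l,q,k_A,g_A$.

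\textbf{Main obstacle.} I expect the hard step to be positivity and boundedness of the individual terms $\sfH^a_\nu$, i.e.\ termwise detailed balance. Locality is routine bookkeeping. If termwise detailed balance is not directly available from \cite{Ding2025EfficientQuantumGibbs}, I would verify it from the explicit filter identities for $f^a$ and $g$, namely the $e^{-\beta\omega}$ and $\tanh$ relations, applied to a single coupling $A^a$.

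Once both hypotheses hold, Theorem~\ref{thm:gibbsstateprepcomplex} applies with $\gamma=\min_j\mathrm{gap}(\mathcal{L}_{\beta_j})$ and $K=\Theta(\beta\|H\|)$. Since $\mathrm{gap}(\sfH_{\beta_j})=\mathrm{gap}(\calL_{\beta_j})$ by similarity, this yields the stated complexity and ancilla count.
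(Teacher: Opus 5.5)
Your proposal follows the paper's own argument. It takes KMS detailed balance from \cite{Ding2025EfficientQuantumGibbs} and assumes irreducibility, gets frustration-freeness from each $\sfH^a_\nu$ annihilating $v(\sqrt{\sigma_\nu})$ (a consequence of $\calL^a_\nu(I)=0$), gets bounded-degree locality from the commuting-Hamiltonian support argument, and then applies Theorem~\ref{thm:gibbsstateprepcomplex} directly. You are more explicit than the paper about needing each $\sfH^a_\nu$ to be Hermitian and semidefinite. Note that for a non-Hermitian coupling $A^a$ this termwise property fails, and you must group $A^a$ with its adjoint partner, which does not affect locality. Your enlarged supports $S'_a,S''_a$ are valid but looser than necessary: every $H_m$ not meeting $\mathrm{supp}(A^a)$ commutes with $L^a$, so $G^a_\nu$ and the $\sigma_\nu^{\pm1/4}$-dressing in fact stay on $S_a$, as the paper asserts.
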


\section{Discussion and outlook}

In this work, we use the detectability lemma to improve quantum Gibbs sampling. We propose a quantum algorithm for preparing the stationary state without simulating Lindbladian evolution, thereby avoiding the associated overhead. We also present a quantum algorithm for preparing the Gibbs state of commuting bounded-degree local Hamiltonians, achieving a quadratic improvement in its dependence on the Lindbladian spectral gap. As a by-product, we also obtain an efficient quantum algorithm for preparing the ground state of frustration-free bounded-degree local Hamiltonians.

For future work, an important direction would be to extend the quadratic improvement in spectral-gap dependence to a broader class of Gibbs-state preparation problems. A natural way to achieve this would be to develop a stronger version of the detectability lemma that applies to quasi-local interactions. Beyond its conceptual interest, such a result could also lead to an improved Gibbs-sampling algorithm along the lines of Corollary~\ref{cor:gate_complexity_approximate_proj}.

More broadly, it would be very interesting to understand whether the detectability lemma can be used as a tool for analyzing the mixing times of Lindbladians 
\cite{temme2014hypercontractivity,ding2024polynomial,bardet2024entropy,alicki2009thermalization,ramkumar2024mixing,KochanowskiAlhambraCapelRouze2024rapid,BardetCapelLiEtAl2023rapid,RouzeFrancaAlhambra2024optimal,capel2024quasi,kastoryano2013quantum,temme2015fast,ChenLiLuYing2024randomized,Rakovszky2024bottlenecks,LiLu2024quantum,Barthel2022solving,Fang2024mixing,DingLiLin2024efficient,DingChenLin2024single}.
Progress in this direction could be especially valuable if it were to yield a genuine quantum advantage. At present, however, many seemingly efficient quantum algorithms based on Lindbladian dynamics are challenged by strong classical algorithms \cite{BakshiLiuMoitraTang2024high,McDonoughYinLucasZhang2025lieb,RamkumarCaiTongJiang2025high,ChenRouzeEtAl2025convergence,ZlokapaKiani2026syk}. Establishing a framework in which detectability-lemma techniques lead to provable quantum advantages in this setting would therefore be particularly compelling.

Another important direction is improving Lindbladian simulation. The recently derived commutator bound for Lindbladians \cite{WangZhouEtAl2026lindbladian} yields a better dependence on the number of terms for local Lindbladians, but its precision dependence is worse than that of \cite{CleveWang2017}. Achieving the same spacetime-volume scaling for Lindbladian simulation as has been obtained for Hamiltonian simulation \cite{HaahHastingsKothariLow2021quantum,ChildsSuTranWiebeZhu2021theory} remains a major open problem.

\section*{Acknowledgments}
Y.T. thanks Andr{\'a}s Gily{\'e}n for helpful discussion.
C.Z. acknowledges support from NSF QLCI grant OMA-2120757.
J.F. acknowledges support by the National Science Foundation under awards DMS-2309378
D.F. acknowledges the support from the U.S. Department of Energy, Office of Science, Accelerated Research in Quantum Computing Centers, Quantum Utility through Advanced Computational Quantum Algorithms, grant no. DE-SC0025572, and National Science Foundation via the NSF CAREER award DMS-2438074. 

\bibliographystyle{ieeetr}
\bibliography{ref,ref2}

\begin{thebibliography}{10}

\bibitem{BrandaoKalev2017quantum}
F.~G. Brand{\~a}o, A.~Kalev, T.~Li, C.~Y.-Y. Lin, K.~M. Svore, and X.~Wu, ``Quantum sdp solvers: Large speed-ups, optimality, and applications to quantum learning,'' {\em arXiv preprint arXiv:1710.02581}, 2017.

\bibitem{vanApeldoorn2017quantum}
J.~Van~Apeldoorn, A.~Gily{\'e}n, S.~Gribling, and R.~de~Wolf, ``Quantum sdp-solvers: Better upper and lower bounds,'' in {\em 2017 IEEE 58th Annual Symposium on Foundations of Computer Science (FOCS)}, pp.~403--414, IEEE, 2017.

\bibitem{Motta2020determining}
M.~Motta, C.~Sun, A.~T. Tan, M.~J. O’Rourke, E.~Ye, A.~J. Minnich, F.~G. Brandao, and G.~K.-L. Chan, ``Determining eigenstates and thermal states on a quantum computer using quantum imaginary time evolution,'' {\em Nature Physics}, vol.~16, no.~2, pp.~205--210, 2020.

\bibitem{chen2023quantumthermalstatepreparation}
C.-F. Chen, M.~J. Kastoryano, F.~G. Brand{\~a}o, and A.~Gily{\'e}n, ``Quantum thermal state preparation,'' {\em arXiv preprint arXiv:2303.18224}, 2023.

\bibitem{LloydAbanin2025quantum}
J.~Lloyd and D.~A. Abanin, ``Quantum thermal state preparation for near-term quantum processors,'' {\em arXiv preprint arXiv:2506.21318}, 2025.

\bibitem{ChenKastoryanoBrandaoGilyen2023}
C.-F. Chen, M.~J. Kastoryano, F.~G. S.~L. Brandao, and A.~Gilyen, ``Quantum thermal state preparation,'' 2023.

\bibitem{RallWangWocjan2023}
P.~Rall, C.~Wang, and P.~Wocjan, ``Thermal {S}tate {P}reparation via {R}ounding {P}romises,'' {\em {Quantum}}, vol.~7, p.~1132, Oct. 2023.

\bibitem{DingLiLin2024efficient}
Z.~Ding, B.~Li, and L.~Lin, ``Efficient quantum {Gibbs} samplers with {Kubo--Martin--Schwinger} detailed balance condition,'' {\em Communications in Mathematical Physics}, vol.~406, no.~3, p.~67, 2025.

\bibitem{CleveWang2017}
R.~Cleve and C.~Wang, ``Efficient quantum algorithms for simulating lindblad evolution,'' in {\em 44th International Colloquium on Automata, Languages, and Programming (ICALP 2017)}, Schloss-Dagstuhl-Leibniz Zentrum f{\"u}r Informatik, 2017.

\bibitem{WocjanTemme2023}
P.~Wocjan and K.~Temme, ``Szegedy walk unitaries for quantum maps: P. wocjan, k. temme,'' {\em Communications in Mathematical Physics}, vol.~402, no.~3, pp.~3201--3231, 2023.

\bibitem{ChenKastoryanoGilyen2023}
C.-F. Chen, M.~J. Kastoryano, and A.~Gilyen, ``An efficient and exact noncommutative quantum gibbs sampler,'' 2023.

\bibitem{LiWang2023}
X.~Li and C.~Wang, ``Simulating markovian open quantum systems using higher-order series expansion,'' in {\em 50th International Colloquium on Automata, Languages, and Programming (ICALP 2023)}, Schloss-Dagstuhl-Leibniz Zentrum f{\"u}r Informatik, 2023.

\bibitem{DingLiLin2024simulating}
Z.~Ding, X.~Li, and L.~Lin, ``Simulating open quantum systems using hamiltonian simulations,'' {\em PRX quantum}, vol.~5, no.~2, p.~020332, 2024.

\bibitem{PocrnicSegalWiebe2024}
M.~Pocrnic, D.~Segal, and N.~Wiebe, ``Quantum simulation of lindbladian dynamics via repeated interactions,'' 2024.

\bibitem{chen2025efficient}
C.-F. Chen, M.~Kastoryano, F.~G. Brand{\~a}o, and A.~Gily{\'e}n, ``Efficient quantum thermal simulation,'' {\em Nature}, vol.~646, no.~8085, pp.~561--566, 2025.

\bibitem{temme2014hypercontractivity}
K.~Temme, M.~J. Kastoryano, M.~B. Ruskai, M.~M. Wolf, and F.~Verstraete, ``Hypercontractivity of quasi-free quantum semigroups,'' {\em Journal of Mathematical Physics}, vol.~55, no.~12, 2014.

\bibitem{ding2024polynomial}
Z.~Ding, B.~Li, L.~Lin, and R.~Zhang, ``Polynomial-time preparation of low-temperature {G}ibbs states for 2d toric code,'' {\em arXiv preprint arXiv:2410.01206}, 2024.

\bibitem{bardet2024entropy}
I.~Bardet, {\'A}.~Capel, L.~Gao, A.~Lucia, D.~P{\'e}rez-Garc{\'\i}a, and C.~Rouz{\'e}, ``Entropy decay for davies semigroups of a one dimensional quantum lattice,'' {\em Communications in Mathematical Physics}, vol.~405, no.~2, p.~42, 2024.

\bibitem{alicki2009thermalization}
R.~Alicki, M.~Fannes, and M.~Horodecki, ``On thermalization in {K}itaev's 2d model,'' {\em Journal of Physics A: Mathematical and Theoretical}, vol.~42, no.~6, p.~065303, 2009.

\bibitem{ramkumar2024mixing}
A.~Ramkumar and M.~Soleimanifar, ``Mixing time of quantum {G}ibbs sampling for random sparse {H}amiltonians,'' {\em arXiv preprint arXiv:2411.04454}, 2024.

\bibitem{KochanowskiAlhambraCapelRouze2024rapid}
J.~Kochanowski, A.~M. Alhambra, A.~Capel, and C.~Rouz{\'e}, ``Rapid thermalization of dissipative many-body dynamics of commuting {H}amiltonians,'' {\em arXiv preprint arXiv:2404.16780}, 2024.

\bibitem{BardetCapelLiEtAl2023rapid}
I.~Bardet, {\'A}.~Capel, L.~Gao, A.~Lucia, D.~P{\'e}rez-Garc{\'\i}a, and C.~Rouz{\'e}, ``Rapid thermalization of spin chain commuting {H}amiltonians,'' {\em Physical Review Letters}, vol.~130, no.~6, p.~060401, 2023.

\bibitem{RouzeFrancaAlhambra2024optimal}
C.~Rouz{\'e}, D.~S. Fran{\c{c}}a, and {\'A}.~M. Alhambra, ``Optimal quantum algorithm for {G}ibbs state preparation,'' {\em arXiv preprint arXiv:2411.04885}, 2024.

\bibitem{capel2024quasi}
{\'A}.~Capel, P.~Gondolf, J.~Kochanowski, and C.~Rouz{\'e}, ``Quasi-optimal sampling from {G}ibbs states via non-commutative optimal transport metrics,'' {\em arXiv preprint arXiv:2412.01732}, 2024.

\bibitem{kastoryano2013quantum}
M.~J. Kastoryano and K.~Temme, ``Quantum logarithmic sobolev inequalities and rapid mixing,'' {\em Journal of Mathematical Physics}, vol.~54, no.~5, 2013.

\bibitem{temme2015fast}
K.~Temme and M.~J. Kastoryano, ``How fast do stabilizer {H}amiltonians thermalize?,'' {\em arXiv preprint arXiv:1505.07811}, 2015.

\bibitem{ChenLiLuYing2024randomized}
H.~Chen, B.~Li, J.~Lu, and L.~Ying, ``A randomized method for simulating {L}indblad equations and thermal state preparation,'' {\em arXiv preprint arXiv:2407.06594}, 2024.

\bibitem{Rakovszky2024bottlenecks}
T.~Rakovszky, B.~Placke, N.~P. Breuckmann, and V.~Khemani, ``Bottlenecks in quantum channels and finite temperature phases of matter,'' {\em arXiv preprint arXiv:2412.09598}, 2024.

\bibitem{LiLu2024quantum}
B.~Li and J.~Lu, ``Quantum space-time {P}oincare inequality for {L}indblad dynamics,'' {\em arXiv preprint arXiv:2406.09115}, 2024.

\bibitem{Barthel2022solving}
T.~Barthel and Y.~Zhang, ``Solving quasi-free and quadratic {L}indblad master equations for open fermionic and bosonic systems,'' {\em Journal of Statistical Mechanics: Theory and Experiment}, vol.~2022, no.~11, p.~113101, 2022.

\bibitem{Fang2024mixing}
D.~Fang, J.~Lu, and Y.~Tong, ``Mixing time of open quantum systems via hypocoercivity,'' {\em arXiv preprint arXiv:2404.11503}, 2024.

\bibitem{DingChenLin2024single}
Z.~Ding, C.-F. Chen, and L.~Lin, ``Single-ancilla ground state preparation via lindbladians,'' {\em Physical Review Research}, vol.~6, no.~3, p.~033147, 2024.

\bibitem{TongZhan2025fast}
Y.~Tong and Y.~Zhan, ``Fast mixing of weakly interacting fermionic systems at any temperature,'' {\em PRX Quantum}, vol.~6, no.~3, p.~030301, 2025.

\bibitem{BakshiLiuMoitraTang2025dobrushin}
A.~Bakshi, A.~Liu, A.~Moitra, and E.~Tang, ``A dobrushin condition for quantum markov chains: Rapid mixing and conditional mutual information at high temperature,'' {\em arXiv preprint arXiv:2510.08542}, 2025.

\bibitem{GilyenChenDoriguelloKastoryano2024quantum}
A.~Gily{\'e}n, C.-F. Chen, J.~F. Doriguello, and M.~J. Kastoryano, ``Quantum generalizations of glauber and metropolis dynamics,'' {\em arXiv preprint arXiv:2405.20322}, 2024.

\bibitem{JiangIrani2024quantum}
J.~Jiang and S.~Irani, ``Quantum metropolis sampling via weak measurement,'' {\em arXiv preprint arXiv:2406.16023}, 2024.

\bibitem{szegedy2004quantum}
M.~Szegedy, ``Quantum speed-up of markov chain based algorithms,'' in {\em 45th Annual IEEE symposium on foundations of computer science}, pp.~32--41, IEEE, 2004.

\bibitem{LengDingChenLin2026operator}
J.~Leng, Z.~Ding, Z.~Chen, and L.~Lin, ``Operator-level quantum acceleration of non-logconcave sampling,'' {\em Proceedings of the National Academy of Sciences}, vol.~123, no.~8, p.~e2512789123, 2026.

\bibitem{AharonovAradVaziraniLandau2011detectability}
D.~Aharonov, I.~Arad, Z.~Landau, and U.~Vazirani, ``Quantum hamiltonian complexity and the detectability lemma,'' {\em arXiv preprint arXiv:1011.3445}, 2010.

\bibitem{AradKitaevLandauVazirani2013area}
I.~Arad, A.~Kitaev, Z.~Landau, and U.~Vazirani, ``An area law and sub-exponential algorithm for 1d systems,'' {\em arXiv preprint arXiv:1301.1162}, 2013.

\bibitem{AharonovAradLandauVazirani2009}
D.~Aharonov, I.~Arad, Z.~Landau, and U.~Vazirani, ``The detectability lemma and quantum gap amplification,'' in {\em Proceedings of the forty-first annual ACM symposium on Theory of computing}, pp.~417--426, 2009.

\bibitem{ascolani2024entropy}
F.~Ascolani, H.~Lavenant, and G.~Zanella, ``Entropy contraction of the {G}ibbs sampler under log-concavity,'' {\em arXiv preprint arXiv:2410.00858}, 2024.

\bibitem{liu1995covariance}
J.~S. Liu, W.~H. Wong, and A.~Kong, ``Covariance structure and convergence rate of the {G}ibbs sampler with various scans,'' {\em Journal of the Royal Statistical Society: Series B (Methodological)}, vol.~57, no.~1, pp.~157--169, 1995.

\bibitem{narayanan2022mixing}
H.~Narayanan and P.~Srivastava, ``On the mixing time of coordinate hit-and-run,'' {\em Combinatorics, Probability and Computing}, vol.~31, no.~2, pp.~320--332, 2022.

\bibitem{Gilyen_2019singularvalue}
A.~Gilyén, Y.~Su, G.~H. Low, and N.~Wiebe, ``Quantum singular value transformation and beyond: exponential improvements for quantum matrix arithmetics,'' in {\em Proceedings of the 51st Annual ACM SIGACT Symposium on Theory of Computing}, STOC ’19, p.~193–204, ACM, June 2019.

\bibitem{LinTong2020}
L.~Lin and Y.~Tong, ``Near-optimal ground state preparation,'' {\em Quantum}, vol.~4, p.~372, 2020.

\bibitem{Ge2018fastergroundstatepreparation}
Y.~Ge, J.~Tura, and J.~I. Cirac, ``Faster ground state preparation and high-precision ground energy estimation with fewer qubits,'' {\em Journal of Mathematical Physics}, vol.~60, no.~2, 2019.

\bibitem{AnshuAradVidick2016simple}
A.~Anshu, I.~Arad, and T.~Vidick, ``Simple proof of the detectability lemma and spectral gap amplification,'' {\em Physical Review B}, vol.~93, no.~20, p.~205142, 2016.

\bibitem{LowWiebe2018hamiltonian}
G.~H. Low and N.~Wiebe, ``Hamiltonian simulation in the interaction picture,'' {\em arXiv preprint arXiv:1805.00675}, 2018.

\bibitem{FangLinTong2023time}
D.~Fang, L.~Lin, and Y.~Tong, ``Time-marching based quantum solvers for time-dependent linear differential equations,'' {\em Quantum}, vol.~7, p.~955, 2023.

\bibitem{ThibodeauClark2023nearly}
M.~Thibodeau and B.~K. Clark, ``Nearly-frustration-free ground state preparation,'' {\em Quantum}, vol.~7, p.~1084, 2023.

\bibitem{LinTong2020optimal}
L.~Lin and Y.~Tong, ``Optimal polynomial based quantum eigenstate filtering with application to solving quantum linear systems,'' {\em Quantum}, vol.~4, p.~361, 2020.

\bibitem{Ding2025EfficientQuantumGibbs}
Z.~Ding, B.~Li, and L.~Lin, ``Efficient quantum gibbs samplers with kubo–martin–schwinger detailed balance condition,'' {\em Communications in Mathematical Physics}, vol.~406, Feb. 2025.

\bibitem{Davies74}
E.~B. Davies, ``Markovian master equations,'' {\em Communications in Mathematical Physics}, vol.~39, no.~2, pp.~91--110, 1974.

\bibitem{Davies76}
E.~B. Davies, ``Markovian master equations. ii,'' {\em Mathematische Annalen}, vol.~219, no.~2, pp.~147--158, 1976.

\bibitem{ChenKastoryanoGilyen2023efficient}
C.-F. Chen, M.~J. Kastoryano, and A.~Gily{\'e}n, ``An efficient and exact noncommutative quantum gibbs sampler,'' {\em arXiv preprint arXiv:2311.09207}, 2023.

\bibitem{BakshiLiuMoitraTang2024high}
A.~Bakshi, A.~Liu, A.~Moitra, and E.~Tang, ``High-temperature {G}ibbs states are unentangled and efficiently preparable,'' {\em arXiv preprint arXiv:2403.16850}, 2024.

\bibitem{McDonoughYinLucasZhang2025lieb}
B.~T. McDonough, C.~Yin, A.~Lucas, and C.~Zhang, ``Lieb-robinson bounds with exponential-in-volume tails,'' {\em PRX Quantum}, vol.~6, no.~4, p.~040322, 2025.

\bibitem{RamkumarCaiTongJiang2025high}
A.~Ramkumar, Y.~Cai, Y.~Tong, and J.~Jiang, ``High-temperature fermionic gibbs states are mixtures of gaussian states,'' {\em arXiv preprint arXiv:2505.09730}, 2025.

\bibitem{ChenRouzeEtAl2025convergence}
H.~Chen, C.~Rouz{\'e}, J.~Chen, J.~Jiang, S.~O. Scalet, Y.~Zhan, G.~K. Chan, L.~Ying, and Y.~Tong, ``Convergence of the cumulant expansion and polynomial-time algorithm for weakly interacting fermions,'' {\em arXiv preprint arXiv:2512.12010}, 2025.

\bibitem{ZlokapaKiani2026syk}
A.~Zlokapa and B.~T. Kiani, ``Syk thermal expectations are classically easy at any temperature,'' {\em arXiv preprint arXiv:2602.22619}, 2026.

\bibitem{WangZhouEtAl2026lindbladian}
X.~Wang, S.~Zhou, X.~Wang, Y.-C. Zheng, S.~Zhang, and T.~Li, ``Lindbladian simulation with commutator bounds,'' {\em arXiv preprint arXiv:2603.28602}, 2026.

\bibitem{HaahHastingsKothariLow2021quantum}
J.~Haah, M.~B. Hastings, R.~Kothari, and G.~H. Low, ``Quantum algorithm for simulating real time evolution of lattice {H}amiltonians,'' {\em SIAM Journal on Computing}, vol.~52, no.~6, pp.~FOCS18--250, 2021.

\bibitem{ChildsSuTranWiebeZhu2021theory}
A.~M. Childs, Y.~Su, M.~C. Tran, N.~Wiebe, and S.~Zhu, ``Theory of trotter error with commutator scaling,'' {\em Physical Review X}, vol.~11, no.~1, p.~011020, 2021.

\end{thebibliography}

\end{document}